\documentclass[11pt,a4paper]{article}

\usepackage[truedimen,margin=34mm]{geometry} 

\usepackage{mathrsfs}
\usepackage{amssymb}
\usepackage{amsmath}
\usepackage{ascmac}
\usepackage{amsthm}
\usepackage[pdftex]{graphicx}
\usepackage[pdftex]{color}
\usepackage{booktabs}
\usepackage{setspace}
\usepackage{natbib}
\usepackage{color}
\usepackage{url}

\usepackage{titlesec}
\titleformat*{\section}{\large\bfseries}
\titleformat*{\subsection}{\it}

\newtheorem{thm}{Theorem}
\newtheorem{lem}{Lemma}

\newtheorem{algo}{Algorithm}

%


\def\ep{{\varepsilon}}

\def\beh{{\widehat \beta}}

\def\pih{{\widehat \pi}}

\def\psih{{\widehat \psi}}
\def\muh{{\widehat \mu}}

\def\Thh{{\widehat \Theta}}

\def\fh{\widehat{f}}
\def\gh{\widehat{g}}

\def\ft{\widetilde{f}}

\def\Gh{\widehat{G}}
\def\Lh{\widehat{L}}
\def\Qh{\widehat{Q}}
\def\Qt{\widetilde{Q}}
\def\psit{\widetilde{\psi}}
\def\pit{\widetilde{\pi}}

\def\Re{\mathbb{R}}
\def\E{{\rm E}}
\def\P{{\rm P}}
\def\Ind{{\rm I}}


\title{{\bf Grouped Heterogeneous Mixture Modeling for Clustered Data }}
\date{}

\begin{document}
\doublespacing
\maketitle

\vspace{-2cm}
\begin{center}
{\large Shonosuke Sugasawa\\
{\it Center for Spatial Information Science, The University of Tokyo}
}
\end{center}

\vspace{0.2cm}
\begin{center}
{\large\bf Abstract}
\end{center}
Clustered data is ubiquitous in a variety of scientific fields. In this paper, we propose a flexible and interpretable modeling approach, called grouped heterogenous mixture modeling, for clustered data, which models cluster-wise conditional distributions by mixtures of latent conditional distributions common to all the clusters. In the model, we assume that clusters are divided into a finite number of groups and mixing proportions are the same within the same group. We provide a simple generalized EM algorithm for computing the maximum likelihood estimator, and an information criterion to select the numbers of groups and latent distributions. We also propose structured grouping strategies by introducing penalties on grouping parameters in the likelihood function. Under the settings where both the number of clusters and cluster sizes tend to infinity, we present asymptotic properties of the maximum likelihood estimator and the information criterion. We demonstrate the proposed method through simulation studies and an application to crime risk modeling in Tokyo.

\bigskip\noindent
{\bf Key words}: 
EM algorithm; Finite mixture; Maximum likelihood estimation; Mixture of experts; Unobserved heterogeneity

\newpage
\section{Introduction}\label{sec:intro}
Data having clustered structures such as postal area, school, individual and species is called clustered data, which appears in a variety of scientific fields.
The main goal of statistical analysis of clustered data would be modeling the response variable as a function of covariates or auxiliary variables while accounting for heterogeneity among clusters. 
Although mixed models or generalized linear mixed models \citep[e.g.][]{Dem2013,Jiang2007} have been adopted as a powerful tool to analyze such data, fitting mixed models under non-normal distributions is computationally challenging and they are not necessarily suitable when the conditional distribution is skewed or multimodal. 
As an alternative strategy, finite mixture modeling \citep{Mc2000} could be useful to flexibly capture distributional relationship between response variables and covariates.
For independent (non-clustered) data, the mixture model with covariates was originally proposed in \cite{Jacobs1991}, known as mixture-of-experts, and a large body of literature has been concerned with flexible modeling of the conditional distribution of non-clustered data \citep[e.g.][]{CD2009, GK2007, Jordan1994, Nguyen2016, Villani2009,Villani2012}.
However, these methods cannot be directly imported into analyzing clustered data since they cannot take account of heterogeneity among clusters.

For modeling cluster-wise conditional distributions, \cite{Rubin1997} and \cite{Ng2014} proposed mixtures of random effects models, but this approach would be computationally intensive since the estimation of a single random effects model is already challenging under non-normal responses.
Moreover, results of such methods are not necessarily interpretable due to the large number of random effects to express between-cluster heterogeneity. 
Alternatively, the cluster-wise conditional distributions can be modeled as the following mixture:
\begin{equation}\label{LMR}
f_i(y|x)=\sum_{k=1}^L\pi_{ik}h_k(y|x;\phi_k), \ \ \ \ i=1,\ldots,m,
\end{equation}
where $m$ is the number of clusters, $h_k(y|x;\phi_k)$ is a latent conditional distribution with parameter $\phi_k$.
In the model (\ref{LMR}), the cluster-wise distributions share the common latent distributions $h_k$, and between-cluster heterogeneity is captured only through the cluster-wise mixing proportions $(\pi_{i1},\ldots,\pi_{iL})$ which play a similar role to random effects.
This approach was considered in \cite{Sun2007} and \cite{Sugasawa2019} which employ logistic mixed models and Dirichlet distributions for modeling the mixing proportions, respectively. 
However, both methods still require complicated numerical algorithms to estimate model parameters.  
Apart from parametric modeling of cluster-wise conditional distributions, \cite{Rosen2000} and \cite{Tang2016} proposed a mixture modeling based on the generalized estimating equation, but the primary interest in these works is estimation of the component distributions in the mixture by taking account of correlations within clusters rather than between-cluster heterogeneity.
Moreover, as nonparametric Bayesian approaches, \cite{Teh2006} and \cite{Rod2008} might have similar philosophy to the mixture modeling (\ref{LMR}) in that they assume sharing mixing components among clusters, while they require Markov Chain Monte Carlo algorithm to fit the models.
Therefore, existing approaches for modeling cluster-wise conditional distributions are computationally intensive and may have poor interpretability, so that effective approaches to estimating cluster-wise conditional distributions while accounting for between-cluster heterogeneity are still limited.

In this paper, we seek an effective mixture modeling that is flexible, interpretable and computationally tractable. 
We work on the model formulation (\ref{LMR}) and propose a grouping strategy for cluster-wise mixing proportions, that is, we assume that all the clusters are divided into a finite number of groups and mixing proportions within the same groups share the same mixing proportions, which we call grouped heterogenous mixture (GHM) modeling.
We introduce unknown group membership parameters, and estimate them together with unknown mixing proportions and parameters in latent distributions via the maximum likelihood method. 
Owing to such a simple structure of the mixing proportion, the maximum likelihood estimator can be easily obtained via the generalized EM algorithm \citep{Meng1993}, in which all the steps are simple and do not require any computationally intensive methods such as numerical integrations.
We propose selecting the numbers of groups and latent distributions using an information criterion.
Under the framework that both the number of clusters and within-cluster sample sizes go to infinity, we show that the grouping assignment and unknown model parameters can be consistently estimated. 
We also show that the numbers of groups and latent distributions can be consistently estimated via the information criterion under the same asymptotic framework.

The estimation of group membership parameters is closely related to {\it k-means} algorithm \citep{Forgy1965}, and the idea of grouping parameters have been recently considered in the context of panel data analysis \citep[e.g.][]{Ando2016, BM2015, HM2010}, but this paper is the first one to adopt the idea for modeling cluster-wise mixing proportions. 
Moreover, in this paper, we also propose a new approach to estimating grouping parameters, referred to structured grouping, which incorporates auxiliary information of clusters into grouping assignment to obtain more interpretable results. 
We realize this strategy via simple penalization on grouping parameters, motivated from stochastic processes on discrete spaces, and demonstrate this strategy with spatial information in our application given in Section \ref{sec:crime}.

The rest of paper is organized as follows.
In Section \ref{sec:method}, we introduce the proposed model, describe the generalized EM algorithm, the information criterion for selecting numbers of groups and latent distributions, and then discuss structured grouping via penalization.
In Section \ref{sec:asymp}, we illustrate the asymptotic properties of the maximum likelihood estimator and the information criterion. 
In Section \ref{sec:num}, we carry out simulation studies to demonstrate the proposed method compared with some existing methods, and apply the proposed method to crime risk modeling in 23 wards of Tokyo by adopting mixture models with zero-inflated Poisson regression models.
Finally, some discussions are given in Section \ref{sec:disc}.
All the proofs are given in Appendix, and R code implementing the proposed method is available at Github repository (\url{https://github.com/sshonosuke/GHM}).

\section{Grouped Heterogeneous Mixture Modeling}\label{sec:method}

\subsection{The proposed model}\label{sec:model}
Let $y_{ij}$ denote the $j$th measurement in the $i$th cluster for $i=1,\ldots,m$ and $j=1,\ldots,n_i$, where $m$ is the number of clusters and $n_i$ is the number of within-cluster samples or cluster size which could be unequal over clusters.
Let $x_{ij}$ be a $p$-dimensional vector of covariates associated with $y_{ij}$, where we assume $p$ is fixed.
We are interested in the conditional density or probability mass function of $y_{ij}$ given $x_{ij}$, denoted by $f_i(y_{ij}|x_{ij})$, which could be heterogeneous across clusters.
For modeling $f_i$, we propose the following grouped heterogeneous mixture (GHM) model:
\begin{equation}\label{model}
f_i(y_{ij}|x_{ij})=\sum_{k=1}^L\pi_{g_ik}h_k(y_{ij}|x_{ij};\phi_k), \ \ \ \ i=1,\ldots,m, \ \ \ j=1,\ldots,n_i,
\end{equation}
where $h_k$'s are latent conditional distributions common to all the clusters, $g_i\in \{1,\ldots,G\}$ denotes an unknown group membership parameter and $\phi_k$ is a vector of unknown parameters in the $k$th latent distributions.
As mentioned in Section \ref{sec:intro}, the important feature of the GHM model (\ref{model}) is the structure that $m$ clusters are divided into $G$ groups, and clusters within the same groups share the same mixing proportions, which results in the same conditional distributions in the same groups.
Therefore, the GHM model would be easy to interpret while it is still flexible to account for the between-cluster heterogeneity.
The latent conditional distribution $h_k$ could be specified depending on the type of response variable $y_{ij}$.
For instance, we may use a normal distribution with mean $x^t\beta_k$ and variance $\sigma_k^2$ when the response variable is continuous, and a Poisson distribution with rate parameter $\exp(x^t\beta_k)$ when the response variable is count valued.
Note that all the $h_k$ does not have to be the same family.
In fact, we adopt combinations of a zero-inflation component and Poisson regression models as latent distributions in the application to crime data given in Section \ref{sec:crime}.

\subsection{Generalized EM algorithm}\label{sec:EM}
We estimate the grouping parameter $g_i$ and unknown parameters $\phi_k$ and $\pi_{gk}$ simultaneously via the maximum likelihood method. 
Define $\Theta=(\gamma,\pi,\phi)$ be the set of unknown parameters, where $\gamma, \pi$ and $\phi$ are collections of $g_i, \pi_{gk}$ and $\phi_k$, respectively. 
We here assume that the numbers of groups $G$ and latent distributions $L$ are known, but the estimation will be discussed later.
Given the data, the log-likelihood function of $\Theta$ is given by
\begin{equation*}
Q(\Theta)=\sum_{i=1}^m\sum_{j=1}^{n_i}\log\left(\sum_{k=1}^L\pi_{g_ik}h_k(y_{ij}|x_{ij};\phi_k)\right).
\end{equation*}
The maximum likelihood estimator $\Thh$ of $\Theta$ is defined as the maximizer of $Q(\Theta)$.
We develop an iterative method for maximizing $Q(\Theta)$ using the generalized EM algorithm \citep{Meng1993}.
To this end, we introduce latent variables $z_{ij}\in\{1,\ldots,L\}$ and consider a hierarchical expression of the model (\ref{model}) given by $y_{ij}|(z_{ij}=k)\sim h_k(y_{ij}|x_{ij};\phi_k)$ and $\P(z_{ij}=k)=\pi_{g_ik}$.
Given $z_{ij}$'s, the complete log-likelihood is given by
$$
L^c(\Theta)=
\sum_{i=1}^m\sum_{j=1}^{n_i}\sum_{k=1}^{L}
\Ind(z_{ij}=k)\left\{\log h_k(y_{ij}|x_{ij};\phi_k)+\log \pi_{g_ik}\right\}.
$$
In the E-step, we compute the conditional probabilities being $z_{ij}=k$ for $k=1,\ldots,L$, given the data and current parameter values $\Theta^{(r)}$, which are denoted by $w_{ijk}^{(r)}$.
The detailed expression of $w_{ijk}^{(r)}$ will be given in Algorithm \ref{algo:GEM}.   
Then, the objective function in the M-step, which is the expectation of the complete log-likelihood, is given by
\begin{equation}\label{Q}
\begin{split}
Q(\Theta|\Theta^{(r)})
&=\sum_{k=1}^{L}\sum_{i=1}^m\sum_{j=1}^{n_i}w_{ijk}^{(r)}\log h_k(y_{ij}|x_{ij};\phi_k)
+\sum_{i=1}^m\sum_{k=1}^{L}\log \pi_{g_ik}\sum_{j=1}^{n_i}w_{ijk}^{(r)}\\
&\equiv Q_1(\phi|\Theta^{(r)})+Q_2(\gamma,\pi|\Theta^{(r)}).
\end{split}
\end{equation}
It is observed that the maximization of $Q_1(\phi|\Theta^{(r)})$ with respect to $\phi$ can be divided into $L$ maximization problems and $\phi_1,\ldots,\phi_L$ can be separately updated.
On the other hand, maximizing $Q_2(\gamma,\pi|\Theta^{(r)})$ includes discrete optimization of $\gamma$ on the space $\{1,\ldots,G\}^m$, and brute-force search over the space would be infeasible. 
Instead of simultaneous maximization, we first maximize $Q_2(\gamma^{(r)},\pi|\Theta^{(r)})$ with respect to $\pi$ and we set $\pi^{(r+1)}$ to the maximizer, and then, maximize $Q_2(\gamma,\pi^{(r+1)}|\Theta^{(r)})$ with respect to $\gamma$ to get $\gamma^{(r+1)}$, which allows separate updating for each element of $\gamma$.
This updating process guarantees that monotone increasing of the objective function, that is, $Q_2(\gamma^{(r)},\pi^{(r)}|\Theta^{(r)})\leq Q_2(\gamma^{(r+1)},\pi^{(r+1)}|\Theta^{(r)})$.
The proposed GEM algorithm is summarized in what follows.

\vspace{0.2cm}
\begin{algo}[Generalized EM algorithm] \ \ \   \label{algo:GEM}
Starting from the initial values $\Theta^{(0)}$ and $r=0$, we repeat the following procedure until the algorithm converges:
\begin{itemize}
\item
(E-step) \ \ 
Compute the following weights:
$$ 
w_{ijk}^{(r)}=
\frac{\pi^{(r)}_{g_ik}h_k(y_{ij}|x_{ij};\phi_k^{(r)})}{\sum_{\ell=1}^L\pi^{(r)}_{g_i\ell}h_\ell(y_{ij}|x_{ij};\phi_\ell^{(r)})},
$$
for $j=1,\ldots,n_i$ and $i=1,\ldots,m$.

\item
(M-step) \ \ 
Update unknown parameters as follows: 
\begin{align*}
&\phi_k^{(r+1)}
={\rm argmax} \ \bigg\{\sum_{i=1}^m\sum_{j=1}^{n_i}w_{ijk}^{(r)}\log h_k(y_{ij}|x_{ij};\phi_k)\bigg\}, \\
&\pi_{gk}^{(r+1)}
=\bigg(\sum_{}{}_{i:g_i^{(r)}=g}n_i\bigg)^{-1}
\sum_{}{}_{i:g_i^{(r)}=g}\sum_{j=1}^{n_i}w_{ijk}^{(r)}, \\
&g_i^{(r+1)}={\rm argmax}_{g=1,\ldots,G} \ \bigg(\sum_{k=1}^{L}\log\pi_{gk}^{(r+1)}\sum_{j=1}^{n_i}w_{ijk}^{(r)}\bigg),  
\end{align*}
for $i=1,\ldots,m$, $g=1,\ldots,G$ and $k=1,\ldots,L$.
\end{itemize}
\end{algo}

\bigskip
Note that updating $\phi_k$ requires maximizing the weighted log-likelihood of $k$th latent distribution, which can be readily carried out as long as the latent density $h_k$ is familiar regression models such as normal linear regression or Poisson regression.
Moreover, for updating $g_i$, we simply compute all the values of the objective function for $g=1,\ldots,G$ and select the maximizer.
The above Algorithm \ref{algo:GEM} can be seen as a combination of the classical EM algorithm for mixture models and a modified {\it k-means} algorithm.

The above algorithm could be sensitive to the choice of initial estimators of $\Theta$, and we propose the following strategy to obtain reasonable initial estimators. For each cluster, we firstly fit different finite mixture models with $L$ components to get estimates of cluster-wise mixing proportions $\pi_{i\ast}^{(0)}=(\pi_{i1}^{(0)},\ldots,\pi_{i,L-1}^{(0)})$ as well as parameters $\phi_{ik}^{(0)}$ in the mixing distributions.
Then, we apply the standard {\it k-means} algorithm with $G$ groups to $\pi_{1\ast}^{(0)},\ldots,\pi_{m\ast}^{(0)}$ and adopt the estimated grouping and centroid as the initial estimators of $\gamma$ and $\pi$, respectively. 
The initial estimators of $\phi_k$ could be obtained via element-wise means or medians of $\{\phi_{1k}^{(0)},\ldots,\phi_{mk}^{(0)}\}$ for $k=1,\ldots,L$.

\subsection{Selection of the numbers of groups and latent distributions}
The proposed mixture model has two tuning parameters, $G$ (the number of groups) and $L$ (the number of latent distributions), which would be critically related to performance and interpretability of the proposed method. 
Using unnecessarily small values of $G$ or $L$ would result in failure to capture the between-cluster heterogeneity or loss of flexibility for modeling the conditional distributions, whereas the use of unnecessarily large values of $G$ or $L$ may lead to redundant modeling which does not hold meaningful interpretations.    
Therefore, we suggest selecting adequate values of $G$ and $L$ by minimizing the following information criterion:
\begin{equation}\label{BIC}
{\rm IC}(G,L)=-2Q(\widehat{\Theta}_{G, L})+\log N\left\{G(L-1)+m+\sum_{k=1}^L{\dim}(\phi_k)\right\},
\end{equation} 
where $N=\sum_{i=1}^mn_i$ is the total sample size, $\Theta_{G, L}$ is the set of unknown parameters under $G$ groups and $L$ latent distributions, and ${\dim}(\phi_k)$ is the dimension of the component-specific parameter $\phi_k$.
Clearly, the above information criterion is motivated from Bayesian information criterion \citep{BIC}, which is known to provide consistent estimation of the number of components in the standard finite mixture models \citep[e.g.][]{Keribin2000}. 
In Section \ref{sec:asymp}, we show that the information criterion (\ref{BIC}) enables us to obtain consistent estimators of $L$ as well as $G$. 
The minimizer can be easily searched over all the combinations of pre-specified candidates of $G$ and $L$, namely $G\in\{G_{\rm min}, \ldots, G_{\rm max}\}$ and $L\in\{L_{\rm min}, \ldots, L_{\rm max}\}$. 
Typically, we adopt $G_{\rm min}=1$ and $L_{\rm min}=1$, and set $G_{\rm max}$ and $L_{\rm min}$ to large and moderate values, respectively.
Since the computation of $\widehat{\Theta}_{G, L}$ for each choice of $(G, L)$ can be easily computed via Algorithm \ref{algo:GEM}, the brute-force search would be feasible in practice.

\subsection{Structured grouping}\label{sec:pen}
In the proposed method, the grouping membership is determined in a model-dependent way. 
On the other hand, we may also incorporate auxiliary or external cluster-level information into grouping. 
Let $s_{ii'}$ be some similarity measure between clusters $i$ and $i'$, satisfying $s_{ii'}\in [0,1]$. 
For example, if the cluster-level covariate $z_i$ is available, we may define $s_{ii'}=\exp(-b\|z_i-z_{i'}\|^2)$ with a scaling constant $b$.
In our application in Section \ref{sec:crime}, we consider incorporating adjacent information of areas (clusters) into grouping assignment.
Specifically, we adopt the indicator of adjacency between $i$th and $i'$th areas for $s_{ii'}$, that is, $s_{ii'}=1$ if two areas are adjacent and $s_{ii'}=0$ otherwise.    
Under the settings, it would be reasonable to make two clusters having larger similarity $s_{ii'}$ be more likely to be classified in the same group, which would make the results more interpretable.
To incorporate such information in the estimation process, we consider maximizing the following penalized likelihood function: 
\begin{equation}\label{PML}
Q(\Theta)+\xi\sum_{i<i'}s_{ii'}{\rm I}(g_i=g_{i'}),
\end{equation}   
where $\xi\geq 0$ is a tuning parameter that controls the strength of the penalty.
The penalty term in (\ref{PML}) is motivated from a stochastic process on the space of grouping parameters, i.e. $\{1,\ldots,G\}^{m}$, called Potts model \citep{Potts1952}, where its probability function is proportional to $\exp(\xi\sum_{i<i'}s_{ii'}{\rm I}(g_i=g_{i'}))$.
Since $Q(\Theta)$ is the log-likelihood function, maximizing the penalized likelihood function (\ref{PML}) can be seen as the maximum a posterior probability estimation under the prior stochastic process.
The maximization of (\ref{PML}) can be still easily carried out by a similar algorithm to Algorithm \ref{algo:GEM}, where the details are given as follows:

\vspace{0.2cm}
\begin{algo}[Modified generalized EM algorithm under penalization] \ \  \label{algo:GEM2}
The E-step and M-steps for $\phi_k$ and $\pi_{gk}$ are the same as those in Algorithm \ref{algo:GEM}, but the M-step for $g_i$ is given by
\begin{align*}
&g_i^{(r+1)}={\rm argmax}_{g=1,\ldots,G} \ \bigg(\sum_{k=1}^{L}\log\pi_{gk}^{(r+1)}\sum_{j=1}^{n_i}w_{ijk}^{(r)} + \xi\sum_{i'=1}^{m}s_{ii'}{\rm I}(g=g_{i'}^{(r)})\bigg),  
\end{align*}
where $i=1,\ldots,m$.
\end{algo}

\medskip
We note that the updating process for $g_i$ can be separately carried out for $i=1,\ldots,m$, thereby we can simply calculate the values of the objective function for $g=1,\ldots,G$, and choose the maximizer.   
Regarding the value of $\xi$, we here recommend simply using small values such as $\xi=0.2$ or $\xi=0.5$, as adopted in Section \ref{sec:crime}.
If we adopt a too large value of $\xi$, the penalty term will be too strong and the information from the likelihood would be ignored in the estimation of $g_i$, which is quite unreasonable. 
Although it might be able to select $\xi$ via some data-dependent manner, we do not consider the details further since it would extend the scope of this paper.  
We also note that all the theoretical results given in Section \ref{sec:asymp} would be valid for the penalized likelihood (\ref{PML}) as long as the penalty term is asymptotically negligible.

\section{Asymptotic Properties}\label{sec:asymp}
We investigate the large-sample properties of the maximum likelihood estimator $\Thh$ and selection performance of the information criterion (\ref{BIC}) when both the number of clusters and cluster sizes tend to infinity. 
Without loss of generality, we suppose that the cluster labels are assigned such that the cluster size increases with the cluster index, that is, $n_1\leq n_2\leq \cdots \leq n_m$.
We assume that $m\to\infty$ as well as $n_1\to\infty$ as considered in literatures \citep[e.g.][]{Vonesh2002,Hui2017b}.
We also adopt the setting that $m/n_1^{\nu}\to 0$ for some $\nu>0$, that is, the cluster sizes do not have to grow at the same rate as $m$, and assume that $n_m=O(n_1^{\alpha})$ for some $\alpha>1$, that is, the largest cluster size $n_m$ is the polynomial order of the smallest cluster size $n_1$. 
In Theorem \ref{thm:main}, we further assume that $G$ and $L$ are known.
We then obtain the following asymptotic results for the maximum likelihood estimator.

\begin{thm}\label{thm:main}
Under regularity conditions given in Appendix and $m, n_1\to\infty$ and $m/n_1^{\nu}\to 0$ for some $\nu>0$, it holds that 
\begin{align}
&\frac1m\sum_{i=1}^m\sum_{k=1}^L(\pih_{\gh_ik}-\pi^0_{g_i^0k})^2=O_p\Big(\frac1m\Big)  \label{st1}\\
&\sqrt{N}F_m(\Thh)^{1/2}(\psih-\psi^0)\to N(0,I_{{\rm dim}(\psi)}),  \label{st2}
\end{align}
where $F_m(\Theta)$ is the Fisher information matrix defined in Appendix, and $g_i^0$, $\pi_{gk}^{0}$ and $\psi^0$ are the true parameters.
\end{thm}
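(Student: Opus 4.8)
The plan is to adapt the two-stage analysis developed for grouped heterogeneity in panel data (\citep[e.g.][]{BM2015,Ando2016}) to the present mixture likelihood. The engine of the whole argument is classification consistency: I would show that, with probability tending to one, the estimated grouping $\gh$ agrees with the truth $g^0$ at every cluster simultaneously, after which both \eqref{st1} and \eqref{st2} reduce to facts about an ordinary finite-mixture MLE with a known, fixed grouping. To get there I first need a preliminary consistency of the continuous parameters $\psi=(\pi,\phi)$, strong enough to guarantee that the true group at each cluster separates from its competitors.

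For the preliminary step I would exploit $Q(\Thh)\ge Q(\Theta^0)$ and convert this likelihood gap, via a uniform law of large numbers for the cluster-averaged normalized log-likelihood and a second-order expansion about $\Theta^0$, into control of a cluster-averaged distance between the fitted and true mixtures. The regularity conditions (identifiability of the mixture up to label permutation, smoothness and integrability of the $h_k$, compactness of the continuous parameter space) let me translate density closeness into closeness of the $\pih_{\gh_ik}$ and $\phih_k$ to their targets. The stochastic fluctuation is handled by a martingale maximal inequality over the double index $(i,j)$, where the point is that $\pi$ and $\phi$ have fixed dimension while the grouping ranges over the finite set $\{1,\ldots,G\}^m$, whose log-cardinality $m\log G$ is dominated by $N$.

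Classification consistency is then the crux. Fixing a cluster $i$, the assignment step of Algorithm~\ref{algo:GEM} selects $\gh_i$ to maximize the within-cluster log-likelihood, a sum of $n_i$ independent terms. By the separation of the true group-specific mixtures (a regularity assumption) the expected per-observation log-likelihood exceeds that of any wrong group by a fixed positive margin, so an exponential concentration bound gives $\P(\gh_i\ne g_i^0)\le C\exp(-c\,n_i)\le C\exp(-c\,n_1)$. Summing, $\E[\#\{i:\gh_i\ne g_i^0\}]\le mC\exp(-c\,n_1)\to 0$, since $m=o(n_1^{\nu})$ grows only polynomially while the per-cluster error decays exponentially in $n_1$; the condition $n_m=O(n_1^{\alpha})$ prevents the largest clusters from upsetting the uniform control. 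Hence $\gh=g^0$ entirely, with probability tending to one. On this event the group proportions are pooled within their true groups, each estimated from order-$N$ observations, which gives the rate in \eqref{st1} comfortably; and the GHM log-likelihood becomes that of a finite mixture with fixed-dimensional $\psi$ and known grouping, so a Taylor expansion of the score at $\psi^0$ together with a Lindeberg--Feller CLT (a sum of $N$ conditionally independent, heterogeneously sized contributions) and the normalized information $F_m(\Theta)$ delivers \eqref{st2}.

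The hardest part is the entanglement of the two stages: because $\gh$ is itself read off the likelihood, the sharp consistency of $\psih$ wanted for \eqref{st2} presupposes correct grouping, while the separation needed for correct grouping presupposes consistency of $\psih$. Breaking this circularity forces the preliminary bound of the second paragraph to hold uniformly over all admissible groupings, and the quantitative heart of the proof is then the bookkeeping that makes the exponential-in-$n_1$ misclassification bound dominate the polynomially growing number of clusters $m$, uniformly across the heterogeneous cluster sizes.
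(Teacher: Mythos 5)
Your proposal follows essentially the same route as the paper: a per-cluster exponential misclassification bound obtained from the separation (measured via Hellinger distance in the paper) between the true and competing group-specific mixtures, which dominates the polynomially growing number of clusters $m=o(n_1^{\nu})$, followed by a reduction to the oracle MLE with known grouping and a standard CLT for the fixed-dimensional $\psi$. The only real difference is cosmetic: you conclude exact recovery of the grouping with probability tending to one via a union bound, whereas the paper keeps the weaker statement that the misclassified fraction is $o_p(n_1^{-\delta})$ and bounds the gap between the profile and oracle likelihoods by Cauchy--Schwarz; both rest on the same estimate $\sum_{i}\P(\gh_i\neq g_i^0)\le C\,m\,e^{-cn_1}$ and the same circularity-breaking uniformity over a neighborhood of $\psi^0$ and over groupings.
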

\vspace{0.3cm}

Although the number of cluster-wise mixing proportions grows at the number of clusters $m$, (\ref{st1}) shows that they are consistently estimated and the convergence rate depends on $m$.
Moreover, (\ref{st2}) shows that the structural parameters $\phi$ in latent distributions and group-wise mixing proportions $\pi$ whose dimensions are fixed are $\sqrt{N}$-consistent.
We next consider asymptotic properties of the information criterion (\ref{BIC}) to select $G$ and $L$.
We obtain the following result.

\begin{thm}\label{thm:selection}
Let $(\Gh,\Lh)$ be the minimizer of the information criterion (\ref{BIC}), and $(G^0, L^0)$ be the true values. 
Under regularity conditions given in Appendix and $m\to\infty$ and $m/n_1^{\nu}\to 0$ for some $\nu>0$, it holds that $P\{(\Gh, \Lh)=(G^0,L^0)\}\to 1$. 
\end{thm}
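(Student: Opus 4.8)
The plan is to show that for every candidate pair $(G,L)$ in the finite grid $\{G_{\rm min},\ldots,G_{\rm max}\}\times\{L_{\rm min},\ldots,L_{\rm max}\}$ with $(G,L)\neq(G^0,L^0)$ one has $P\{{\rm IC}(G,L)>{\rm IC}(G^0,L^0)\}\to1$; since the grid is finite, a union bound then yields the theorem. Because every candidate model carries exactly $m$ group-membership parameters, the term $m\log N$ in the penalty is common to all models and cancels in any comparison, so it suffices to compare $-2Q(\Thh_{G,L})+\log N\{G(L-1)+\sum_{k=1}^L{\rm dim}(\phi_k)\}$ across models. I would then partition the candidates into \emph{underfitting} models, whose parameter space cannot represent $f^0$ (equivalently $G<G^0$ or $L<L^0$), and \emph{overfitting} models, those with $G\geq G^0$ and $L\geq L^0$ but $(G,L)\neq(G^0,L^0)$, whose space strictly contains the truth.

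For the underfitting case the driving fact is a positive Kullback--Leibler gap. Under the identifiability part of the regularity conditions, the best approximation of $f^0$ within an underfitting class is separated from $f^0$ in KL divergence, and, since a too small $L$ misspecifies every cluster and a too small $G$ forces merging of groups each containing a nonvanishing fraction of clusters, this separation affects a nonvanishing fraction of the total sample. Consequently $Q(\Thh_{G^0,L^0})-Q(\Thh_{G,L})\geq cN$ with probability tending to one for some $c>0$, which I would establish by combining the population separation with a uniform law of large numbers over the (compactified) parameter space. As the penalty difference between any two grid models is $O(\log N)$, the $O_p(N)$ deficit dominates and forces ${\rm IC}(G,L)>{\rm IC}(G^0,L^0)$.

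For the overfitting case the roles reverse: since the space contains the truth, $Q(\Thh_{G,L})\geq Q(\Thh_{G^0,L^0})$, while the penalty strictly increases by $\log N\{G(L-1)+\sum_k{\rm dim}(\phi_k)-G^0(L^0-1)-\sum_k{\rm dim}(\phi_k^0)\}$, a positive multiple of $\log N$. It therefore suffices to prove that the likelihood gain is negligible relative to the penalty, namely $Q(\Thh_{G,L})-Q(\Thh_{G^0,L^0})=o_p(\log N)$. I would derive this from a bound on the overfitted likelihood ratio, using the $\sqrt N$-consistency and asymptotic normality of the fixed-dimensional parameters from Theorem \ref{thm:main} together with control of the extra, non-identified directions, and from the fact that with probability tending to one the estimated grouping coincides with the true one up to relabeling, so that the maximization over the discrete space $\{1,\ldots,G\}^m$ contributes no additional first-order term. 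The assumption $m/n_1^{\nu}\to0$ enters here to keep the complexity of the $G^m$ possible groupings, of log-order $m\log G$, dominated by the growing per-cluster information.

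The main obstacle is precisely this overfitting bound. Unlike a regular model, an overfitted finite mixture has degenerate Fisher information and a non-identifiable limit, so the usual quadratic expansion of the log-likelihood ratio fails and the ratio need not be $O_p(1)$; establishing the sharp rate $o_p(\log N)$, the analogue of the classical $O_p(\log\log N)$ behaviour, while simultaneously accounting for the estimated discrete grouping is the delicate step. I expect to handle it by localizing around the true model, reparametrizing so that the redundant components and groups are absorbed into score and higher-order terms, and bounding their contribution uniformly, with $m/n_1^{\nu}\to0$ ensuring that the grouping search does not inflate this rate.
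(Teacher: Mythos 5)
Your proposal follows essentially the same route as the paper's proof: a union bound over the finite candidate grid, a positive Kullback--Leibler separation (which the paper simply assumes as condition (C5)) making the order-$N$ likelihood deficit dominate the $O(\log N)$ penalty difference for underfitted models, and a bound on the overfitted likelihood-ratio gain that is then beaten by the strictly positive penalty increase. The one substantive difference is the overfitting step you correctly flag as delicate: where you propose to prove $o_p(\log N)$ from scratch via localization and reparametrization around the non-identified limit, the paper decomposes the gain into a grouping-mismatch term (shown negligible via the misclassification lemma) plus a sum of group-wise classical mixture likelihood-ratio statistics, each of which is $O_p(1)$ by the result of Dacunha-Castelle and Gassiat (1999) on likelihood-ratio tests for the number of mixture components---so your plan is sound but the hard step can be discharged by that citation rather than re-derived.
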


Theorem \ref{thm:selection} shows that the information criterion (\ref{BIC}) holds selection consistency for both $G$ and $L$.
This can be seen as an extension of the results regarding selection consistency of BIC in the standard mixture model \citep{Keribin2000}.

\section{Numerical Studies}\label{sec:num}

\subsection{Simulation study: continuous response}\label{sec:sim-G}
We investigate the finite sample performance of the proposed method together with some existing methods.
For simplicity, we set the cluster sizes to be the same in all the clusters, that is, $n_i=n$ for $i=1,\ldots,m$, where $m$ is the number of clusters.
We first generated two covariates $x_{1ij}$ and $x_{2ij}$ from the uniform distribution on $(-0.2, 0.8)$ and Bernoulli distribution with success probability 0.5, respectively. 
We considered five scenarios for the true structures of cluster-wise conditional distributions written in the following form:
\begin{align*}
f_i(y_{ij}|x_{1ij},x_{2ij})
&=\pi_i\phi(y_{ij}; \beta_{i10}+\beta_{i11}x_{1ij}+\beta_{i12}x_{2ij}, \sigma_1^2)\\
& \ \ \ \ +(1-\pi_i)\phi(y_{ij}; \beta_{i20}+\beta_{i21}x_{1ij}+\beta_{i22}x_{2ij}, \sigma_2^2).
\end{align*}
where $\phi(\cdot; a,b)$ denotes a normal density function with mean $a$ and variance $b$.
In the first three scenarios, we set $\beta_{i20}=\beta_{i22}=-\beta_{i10}=-\beta_{i12}=0.5$, $\beta_{i11}=-\beta_{i21}=1$ for all $i$, and set $\sigma_1=0.2$ and $\sigma_2=0.5$, thereby all the clusters share the same latent distributions and variations of cluster-wise conditional distributions are determined by the cluster-specific mixing proportion $\pi_i$.
The generating processes of $\pi_i$ are given by ${\rm (I)}\ {\rm Beta}(2,1)$, ${\rm (II)}\ {\rm TP}(0.1, 0.9)$ and ${\rm (III)}\ {\rm TP}(0.1, 0.9)+U(-0.1,0.1)$, where TP$(a,b)$ denotes a two point distribution on $a$ and $b$ with equal probabilities.  
We note that scenario (I) produces different mixing proportions over all the clusters, while $\pi_i$ takes only two values in scenario (II), that is, all the clusters can be divided into two groups.
Scenario (III) corresponds to the case where the clusters can be approximately clustered into two groups.
In scenario (IV), we generated the regression coefficients from $\beta_{i10}, \beta_{i12}\sim N(-0.5, (0.3)^2)$, $\beta_{i20}, \beta_{i22}\sim N(0.5, (0.3)^2)$, $\beta_{i11}\sim N(1, (0.3)^2)$ and $ \beta_{i21}\sim N(-1, (0.3)^2)$ and set $\sigma_1=0.2$ and $\sigma_2=0.5$.
We also generated the mixing proportion $\pi$ from ${\rm Beta}(2,1)$.
In this scenario, both mixing proportions and distributions are different in each cluster.
Finally, we considered a more simplified structure as scenario (V), that is, we set $\pi_i=1$ for all $i$, and adopted $\beta_{i10}, \beta_{i11}, \beta_{i12}\sim N(0,(0.5)^2)$ and $\sigma=0.3$.
Regarding the settings of $m$ (number of clusters) and $n$ (number of within-cluster sample size), we considered five settings, namely, $(m, n)=(20, 80), (20, 160), (40, 80), (40, 160)$ and $(60, 80)$.

For the simulated dataset, we applied the proposed GHM model with normal linear regression models as latent conditional distributions in which $G$ (the numbers of groups) and $L$ (the number of latent distributions) are selected by the information criterion (\ref{BIC}) among combinations of $G\in \{1,2,\ldots,10\}$ and $L\in \{1,2,3,4\}$.
We also applied the proposed method with $L=2$ and the largest number of groups, $G=10$, denoted by fGHM.
For comparison, we firstly consider two logical options that directly apply the standard finite mixture of normal linear regression models, which we call global mixture (GM) and local mixture (LM).
The GM method applies the finite mixture model to all the data by ignoring the nature of clustered data, while the LM method applies the mixture model separately to the clusters. 
The number of components in both GM and LM were selected from $\{1,\ldots,4\}$ by BIC.
As a competitor from random effects models, we adopted random coefficient (RC) models using R package ``lme4" \citep{Bates2016}.
Moreover, as a competitor from an advanced mixture methods for clustered data, we applied the latent Dirichlet mixture (LDM) approach with normal linear regression models as latent conditional distributions, developed in \cite{Sugasawa2019}, which models the cluster-wise mixing proportions via the Dirichlet distribution.
It should be noted that the computational cost of LDM is quite intensive since it requires a Monte Carlo EM algorithm including Markov Chain Monte Carlo algorithm in its E-step, thereby we set the number of component to $2$ in all the cases.

Based on these methods, we computed estimators of cluster-wise conditional distributions $\fh_i$, and compute the following mean integrated squared errors (MISE):
\begin{align}\label{MISE}
{\rm MISE}
&=\frac1{m}\sum_{i=1}^m\sum_{x_2\in \{0,1\}}\int_{-0.2}^{0.8}\int_{-\infty}^{\infty}\Big\{\fh_i(y|x_1,x_2)-f_i(y|x_1,x_2)\Big\}^2{\rm d}y{\rm d}x_1, \notag \\
& \approx \frac1{m}\sum_{i=1}^m\sum_{x_2\in \{0,1\}}\sum_{x_1\in \Delta_{x_1}}\sum_{y\in \Delta_y}b_{x_1}b_y\Big\{\fh_i(y|x_1,x_2)-f_i(y|x_1,x_2)\Big\}^2,
\end{align}
where $\Delta_{x_1}$ and $\Delta_y$ are sets of equally-spaced points from $-0.2$ to $0.8$ with $b_{x_1}$ increment  and from $-5$ to $5$ with $b_y$ increment, respectively, and we set $b_{x_1}=0.02$ and $b_y=0.1$. 
In Table \ref{tab:sim-Gauss}, we reported the MISE values averaged over 300 replications, for five scenarios of data generating mechanism with five combinations of $(m, n)$.
It is observed that the LM and GM approaches do not provide preferable performance compared with the proposed methods possibly due to the lack of model flexibility in GM and instability of estimation in LM under moderate sample sizes such as $n=80$ and $160$.
Although RC works quite well when the model is correctly specified as in scenario (V), its performance is not necessarily satisfactory under the other scenarios. 
On the other hand, the LDM approach works reasonably well in all the scenarios by flexibly modeling the cluster-wise mixing proportions via the Dirichlet distribution. 
However, overall the proposed GHM performs as well as or better than the LDM method. 
Moreover, it is interesting to see that the performance of GHM is quite comparable with LDM even under scenario (I) in which mixing proportions are completely different over clusters, while GHM imposes clustering structures.  
In scenario (IV), both mixing proportions and distributions are different, thereby both GHM and LDM are misspecified, but GHM provides better estimation performance than LDM.
In Table \ref{tab:GHM-Gauss}, we reported average values of selected numbers of $G$ and $L$ in GHM. 
In scenarios (I)$\sim$(III), the average values of selected $L$ are almost equal to the true value $2$, and the average value of $G$ tends to increase in the order of scenarios (II), (III) and (I), as the variations in mixing proportions increase.
We note that the true data generating process is consistent with the proposed GHM in scenario (II), and the average values of the selected $G$ is almost equal to $2$, which would support the theoretical result given in Theorem \ref{thm:selection}. 
In these scenarios, the performance of GHM and fGHM are comparable, so that the effect of selecting smaller numbers of $G$ on estimation performance could be limited.
In scenarios (IV) and (V), the average numbers of $G$ and $L$ are larger than those in the other scenarios possibly because the data generating processes are much more complicated than scenarios (I)$\sim$(III).  
In these cases, selecting suitable values of $G$ as well as $L$ in GHM leads to the improvement of estimation performance compared with fGHM.

\begin{table}
\caption{
Square root of mean integrated squared errors (MISE) averaged over 300 replications for six methods, the proposed grouped heterogeneous mixture (GHM and fGHM), global mixture (GM), local mixture (LM), random coefficient (RC) and latent Dirichlet mixture (LDM), under continuous responses with five scenarios and five combinations of $m$ (the number of clusters) and $n$ (within-cluster sample sizes). The following values are multiplied by 10.
\label{tab:sim-Gauss}
}
\begin{center}
\begin{tabular}{cccccccccccccccc}
\hline
Scenario & $(m ,n)$ &  & GHM & fGHM & GM & LM & RC & LDM \\
 \hline
 & (20, 80) &  & 3.95 & 3.94 & 5.79 & 24.09 & 5.03 & 3.92 \\
 & (20, 160) &  & 3.90 & 3.91 & 5.76 & 6.76 & 5.01 & 3.87 \\
(I) & (40, 80) &  & 3.94 & 3.94 & 5.78 & 22.11 & 5.03 & 3.96 \\
 & (40, 160) &  & 3.90 & 3.91 & 5.76 & 7.89 & 5.01 & 3.89 \\
 & (60, 80) &  & 3.94 & 3.97 & 5.81 & 22.95 & 5.03 & 3.96 \\
 \hline
 & (20, 80) &  & 4.52 & 4.53 & 7.05 & 23.14 & 4.76 & 4.53 \\
 & (20, 160) &  & 4.51 & 4.53 & 7.02 & 6.59 & 4.73 & 4.52 \\
(II) & (40, 80) &  & 4.50 & 4.52 & 7.07 & 23.50 & 4.75 & 4.52 \\
 & (40, 160) &  & 4.50 & 4.52 & 7.09 & 6.42 & 4.72 & 4.52 \\
 & (60, 80) &  & 4.50 & 4.52 & 7.10 & 24.25 & 4.75 & 4.53 \\
 \hline
 & (20, 80) &  & 4.57 & 4.57 & 7.06 & 24.11 & 4.79 & 4.55 \\
 & (20, 160) &  & 4.55 & 4.55 & 7.06 & 6.16 & 4.75 & 4.54 \\
(III) & (40, 80) &  & 4.55 & 4.55 & 7.12 & 23.94 & 4.79 & 4.56 \\
 & (40, 160) &  & 4.54 & 4.54 & 7.11 & 5.92 & 4.76 & 4.52 \\
 & (60, 80) &  & 4.54 & 4.57 & 7.12 & 23.11 & 4.78 & 4.54 \\
 \hline
 & (20, 80) &  & 4.75 & 5.08 & 5.44 & 22.67 & 5.25 & 5.08 \\
 & (20, 160) &  & 4.66 & 5.04 & 5.43 & 7.29 & 5.19 & 5.04 \\
(IV) & (40, 80) &  & 4.81 & 5.11 & 5.41 & 22.55 & 5.24 & 5.10 \\
 & (40, 160) &  & 4.76 & 5.08 & 5.45 & 7.22 & 5.20 & 5.07 \\
 & (60, 80) &  & 4.84 & 5.13 & 5.41 & 22.45 & 5.23 & 5.11 \\
 \hline
 & (20, 80) &  & 5.85 & 6.09 & 7.01 & 23.71 & 5.88 & 6.05 \\
 & (20, 160) &  & 5.80 & 6.14 & 7.05 & 6.00 & 5.80 & 6.10 \\
(V) & (40, 80) &  & 5.76 & 6.17 & 6.99 & 19.68 & 5.79 & 6.10 \\
 & (40, 160) &  & 5.80 & 6.20 & 7.02 & 6.24 & 5.85 & 6.13 \\
 & (60, 80) &  & 5.82 & 6.23 & 7.03 & 21.76 & 5.84 & 6.14 \\
\hline
\end{tabular}
\end{center}
\end{table}

\begin{table}
\caption{
Average values of selected $G$ (the number of groups) and $L$ (the number of latent distributions) in groped heterogeneous mixture (GHM) modeling under continuous responses.
\label{tab:GHM-Gauss}
}
\begin{center}
\begin{tabular}{cccccccccccccccc}
\hline
&& \multicolumn{2}{c}{(I)} & \multicolumn{2}{c}{(II)} & \multicolumn{2}{c}{(III)} & \multicolumn{2}{c}{(IV)} & \multicolumn{2}{c}{(V)} \\
$(m, n)$ &  & $G$ & $L$ & $G$ & $L$ & $G$ & $L$ & $G$ & $L$ & $G$ & $L$\\
\hline
(20, 80) &  & 4.24 & 2.00 & 2.07 & 2.00 & 3.32 & 2.00 & 6.79 & 3.83 & 7.12 & 3.95\\
(20, 160) &  & 5.21 & 2.01 & 2.03 & 2.00 & 5.17 & 2.00 & 8.02 & 3.90 & 7.81 & 3.96\\
(40, 80) &  & 5.49 & 2.00 & 2.27 & 2.00 & 4.58 & 2.01 & 8.21 & 3.90 & 8.07 & 3.95\\
(40, 160) &  & 6.63 & 2.01 & 2.03 & 2.01 & 5.05 & 2.00 & 8.63 & 3.93 & 8.49 & 4.00\\
(60, 80) &  & 6.43 & 2.02 & 2.29 & 2.00 & 4.82 & 2.02 & 8.59 & 3.90 & 8.25 & 3.96\\
\hline
\end{tabular}
\end{center}
\end{table}

\subsection{Simulation study: count response}\label{sec:sim-Po}
We next investigate the performance under count responses by adopting the following structures of cluster-wise conditional distributions: 
\begin{align*}
f_i(y_{ij}|x_{1ij},x_{2ij})
&=\pi_i{\rm Po}(y_{ij}; \exp(\beta_{i10}+\beta_{i11}x_{1ij}+\beta_{i12}x_{2ij}))\\
& \ \ \ \ +(1-\pi_i){\rm Po}(y_{ij}; \exp(\beta_{i20}+\beta_{i21}x_{1ij}+\beta_{i22}x_{2ij})),
\end{align*}
where ${\rm Po}(\cdot; \lambda)$ denotes the probability mass function of Poisson distribution with rate parameter $\lambda$.
We considered the same five scenarios for the mixing proportion $\pi$ as in Section \ref{sec:sim-G}.
Regarding the regression coefficients, we set $\beta_{i20}=\beta_{i22}=-\beta_{i10}=-\beta_{i12}=0.25$, $\beta_{i11}=-\beta_{i21}=0.5$ for all $i$, in scenario (I), (II) and (III), and generated from $\beta_{i10}, \beta_{i12}\sim N(-0.25, (0.3)^2)$, $\beta_{i20}, \beta_{i22}\sim N(0.25, (0.3)^2)$, $\beta_{i11}\sim N(0.5, (0.3)^2)$ and $ \beta_{i21}\sim N(-0.5, (0.3)^2)$ in scenario (IV).
In scenario (V), we employed the same structure as in Scenario \ref{sec:sim-G}.
Also we considered the same five combinations of $(m, n)$.

For the simulated dataset, we applied the GHM, fGHM, GM, LM and LDM methods with Poisson regression models as latent distributions and the other settings equivalent to the previous simulation study in Section \ref{sec:sim-G}.
Moreover, we applied a Poisson mixed model with random intercept (RI), by using R package ``lme4".
The estimated cluster-wise conditional distributions were evaluated by the MISE (\ref{MISE}) with $\Delta_y=\{0, 1, \ldots, 15\}$.
In Table \ref{tab:sim-Po}, we reported the MISE values averaged over 300 replications. 
Since the true generating process for $\pi_i$ is the same as the assumed model in LDM in scenario (I), it is quite reasonable that the LDM performs the best in this scenario. 
Although the proposed GHM and aGHM work slightly worse than LDM, they perform much better than the other approaches.
This is because GHM pursues computational tractability and interpretable results by grouping clusters at the cost of some loss of flexibility compared with LDM. 
In this scenario, fGHM using $G=10$ performs better than GHM, possibly because the true mixing proportions do not admit grouping structures and the large number of groups could be necessarily for modeling such complicated structures.  
The GHM method performs the best in scenario (II) since the mixing proportion $\pi_i$ admits the grouped structure in this case. 
In the other scenarios, the assumed models for $\pi_i$ in both GHM and LDM are misspecified, and the results show that the two methods are quite comparable, thereby the proposed method would be more appealing in that the proposed method is computationally much less intensive and is more interpretable than LDM.
We also note that the two logical approaches, GM and LM, do not perform well in all the scenarios due to the same reasons given in Section \ref{sec:sim-G}.
Although the RI method performs quite well in scenario (V) since the true data generating process is quite close to RI, the performance in the other scenarios is not necessarily reasonable. 
Overall, the performance gets better as $m$ and $n$ increases, but increasing of $n$ would be more influential than that of $m$.  
The overall trend regarding the change of $m$ and $n$ are quite similar to Table \ref{tab:sim-Gauss}.
Finally, in Table \ref{tab:GHM-Po}, we reported the average values of selected $G$ and $L$ in GHM, which shows that the selected $G$ may adaptively change depending on the complexity of the true structure of $\pi_i$. 
In particular, the average numbers of $G$ and $L$ in scenario (IV) are larger than the other scenarios since the cluster-specific distributions in scenario (IV) hold quite complicated between-cluster heterogeneity.

\begin{table}
\caption{
Square root of mean integrated squared errors (MISE) averaged over 300 replications for six methods, the proposed grouped heterogeneous mixture (GHM and fGHM), global mixture (GM), local mixture (LM), random intercept (RI) and latent Dirichlet mixture (LDM), under count responses with five scenarios and five combinations of $m$ (the number of clusters) and $n$ (within-cluster sample sizes). The following values are multiplied by 100.
}
\label{tab:sim-Po}
\begin{center}
\begin{tabular}{cccccccccccccccc}
\hline
Scenario & $(m ,n)$ &  & GHM & fGHM & GM & LM & RI & LDM \\
 \hline
 & (20, 80) &  & 9.77 & 8.65 & 16.11 & 19.89 & 18.69 & 8.17\\
 & (20, 160) &  & 7.36 & 6.42 & 15.55 & 16.88 & 18.08 & 5.92\\
(I) & (40, 80) &  & 9.06 & 8.48 & 15.67 & 19.82 & 18.58 & 7.47\\
 & (40, 160) &  & 6.89 & 6.52 & 15.42 & 17.05 & 18.00 & 5.44\\
 & (60, 80) &  & 8.77 & 8.43 & 15.47 & 19.89 & 18.55 & 7.14\\
 \hline
 & (20, 80) &  & 4.93 & 6.40 & 29.61 & 14.89 & 18.74 & 6.41\\
 & (20, 160) &  & 3.47 & 4.68 & 29.46 & 11.83 & 18.06 & 4.84\\
(II) & (40, 80) &  & 4.20 & 5.89 & 29.94 & 14.97 & 18.88 & 5.83\\
 & (40, 160) &  & 2.68 & 4.07 & 29.80 & 11.75 & 18.19 & 4.38\\
 & (60, 80) &  & 4.17 & 5.70 & 29.98 & 14.92 & 18.94 & 5.58\\
 \hline
 & (20, 80) &  & 6.47 & 6.74 & 29.99 & 15.04 & 18.85 & 6.47\\
 & (20, 160) &  & 5.10 & 5.07 & 29.69 & 12.01 & 18.11 & 4.83\\
(III) & (40, 80) &  & 6.02 & 6.36 & 30.12 & 15.16 & 19.01 & 5.93\\
 & (40, 160) &  & 4.66 & 4.81 & 30.06 & 11.98 & 18.31 & 4.34\\
 & (60, 80) &  & 5.94 & 6.35 & 30.22 & 15.15 & 19.00 & 5.70\\
 \hline
 & (20, 80) &  & 16.32 & 16.44 & 22.74 & 19.57 & 22.96 & 16.73\\
 & (20, 160) &  & 14.54 & 15.91 & 22.62 & 15.67 & 22.51 & 15.58\\
(IV) & (40, 80) &  & 15.97 & 16.70 & 22.70 & 19.77 & 23.02 & 16.38\\
 & (40, 160) &  & 14.51 & 16.19 & 22.84 & 15.76 & 22.73 & 15.32\\
 & (60, 80) &  & 15.78 & 16.90 & 23.00 & 19.95 & 23.00 & 16.31\\
 \hline
 & (20, 80) &  & 11.47 & 10.97 & 13.64 & 15.27 & 9.35 & 11.37\\
 & (20, 160) &  & 10.35 & 9.95 & 13.37 & 12.10 & 8.30 & 10.03\\
(V) & (40, 80) &  & 10.96 & 10.64 & 13.45 & 14.96 & 8.88 & 10.84\\
 & (40, 160) &  & 10.23 & 10.09 & 13.38 & 12.26 & 8.24 & 9.76\\
 & (60, 80) &  & 10.82 & 10.61 & 13.48 & 14.93 & 8.82 & 10.70\\
\hline
\end{tabular}
\end{center}
\end{table}

\begin{table}
\caption{
Average values of selected $G$ (the number of groups) and $L$ (the number of latent distributions) in groped heterogeneous mixture (GHM) modeling under count responses.
}
\label{tab:GHM-Po}
\begin{center}
\begin{tabular}{cccccccccccccccc}
\hline
&& \multicolumn{2}{c}{(I)} & \multicolumn{2}{c}{(II)} & \multicolumn{2}{c}{(III)} & \multicolumn{2}{c}{(IV)} & \multicolumn{2}{c}{(V)} \\
$(m, n)$ &  & $G$ & $L$ & $G$ & $L$ & $G$ & $L$ & $G$ & $L$ & $G$ & $L$\\
\hline
(20, 80) &  & 4.11 & 2.00 & 2.66 & 2.00 & 3.15 & 2.00 & 4.72 & 2.24 & 3.97 & 2.00\\
(20, 160) &  & 5.05 & 2.00 & 2.39 & 2.00 & 3.45 & 2.00 & 6.08 & 2.55 & 4.82 & 2.00\\
(40, 80) &  & 5.74 & 2.00 & 3.37 & 2.00 & 4.61 & 2.00 & 6.33 & 2.46 & 5.30 & 2.00\\
(40, 160) &  & 6.98 & 2.00 & 2.49 & 2.00 & 5.11 & 2.00 & 8.02 & 2.76 & 7.00 & 2.00\\
(60, 80) &  & 6.97 & 2.00 & 4.10 & 2.00 & 5.99 & 2.00 & 7.47 & 2.65 & 6.47 & 2.00\\
\hline
\end{tabular}
\end{center}
\end{table}

\subsection{Example: crime risk modeling in 23 wards of Tokyo}\label{sec:crime}
We here apply the proposed method to crime risk modeling in 23 wards of Tokyo, Japan.
We used a dataset of number of police-recorded crime in Tokyo metropolitan area, provided by University of Tsukuba and publicly available online (``GIS database of number of police-recorded crime at O-aza, chome in Tokyo, 2009-2017'', available at \url{https://commons.sk.tsukuba.ac.jp/data_en}). 
In this study, we focus on the number of violent crimes in $m=23$ wards in Tokyo metropolitan area in 2015, and wards are recognized as clusters. 
For $i=1,\ldots,m$, each ward contains $n_i$ local towns, and $n_i$ ranges from 51 to 275, and the number of total local towns (sample sizes) is 2855. 
The wards are indexed as $i=1,\ldots,m$ and local towns in the $i$th ward are indexed as $j=1,\ldots,n_i$. 
For auxiliary information for each town, area (km$^2$), population densities in noon and night, density of foreign people, percentage (\%) of unemployment, single-person household and university graduation and average duration of residence are available in each local town.

Let $y_{ij}$ be the number of crimes in the $j$th local town in the $i$th ward, and we found that the ratios of zero counts in $y_{ij}$ is relatively large.
For modeling the conditional distribution of $y_{ij}$ given the covariates, we employed the following mixture model of Poisson regressions and a zero-inflated component:
\begin{equation}\label{ZIP}
y_{ij}|x_{ij}\sim \pi_{g_i1}\delta_0(y_{ij})+\sum_{k=2}^L\pi_{g_ik}{\rm Po}(y_{ij}; a_{ij}\exp(x_{ij}^t\beta_k)), 
\end{equation} 
where $\delta_0(\cdot)$ denotes the one-point distribution on the origin which corresponds to the zero-inflation component, $a_{ij}$ is the area, $x_{ij}$ is a vector of 7 covariates described above and an intercept term, and $g_i\in \{1,\ldots,G\}$ are grouping parameters.  
Note that $\log a_{ij}$ can be regarded as an offset term, and $\exp(x_{ij}^t\beta_k)$ can be interpreted as the crime risk per square kilometer. 
In the model (\ref{ZIP}), the cluster-specific mixing proportions act as ward-specific effects to address the heterogeneity among the wards. 
Based on the adjacent information among 23 woads, we define the adjacent indicator $s_{ii'}$ for $i, i'\in \{1,\ldots,m\}$, that is, $s_{ii'}=1$ if the $i$th ward and the $i'$th ward are adjacent and $s_{ii'}=0$ otherwise, and adopted the penalized likelihood (\ref{PML}) with $\xi=0.2$ to address potential spatial similarity.
We also tried $\phi=0.3$ and $\phi=0.5$, but the following results did not change much. 
Based on the information criterion (\ref{BIC}), we estimated $G$ and $L$ from candidates $\{1,2,\ldots,8\}$ and $\{2,3,4\}$, respectively, and $G=4$ and $L=3$ were selected.
For comparison, we also applied the LDM model with $L=3$ in which the first component is the zero-inflation component and the other two components are Poisson regression models.

In Table \ref{tab:est}, we reported the estimates of regression coefficients in the latent two Poisson regression models, which reveals that the coefficients in GHM and LDM are not much different since the modeling frameworks of GHM and LDM are quite similar. 
Comparing the regression coefficients in the two components, they are quite different in some covariates such as population density in night and percentage of single-person household, which would indicate the importance of using mixture modeling in this case. 
To visualize the grouping structure, we present the geographical map of estimated groping assignment in 23 wards in the left panel of Figure \ref{fig:group}.
The figure shows that the 23 wards can be successfully divided into $G=4$ groups and adjacent structures could be incorporated into the grouping result. 
For comparing estimated mixing proportions in GHM and LDM, in the right panel of Figure \ref{fig:group}, we show estimates of cluster-wise (ward-wise) mixing proportions in LDM together with group-wise mixing proportions in GHM for the second and third components.
We colored the estimates of cluster-wise mixing proportions according to the grouping assignment in GHM. 
It is noted that the estimated prior probabilities in LDM are $0.14$ and $0.74$ for the second and third components, respectively, thereby the estimated mixing proportions in LMD are scattered around these values. 
From the figure, it is observed that the variations of the mixing proportions in LDM seems less than the group-wise mixing proportions in GHM, possibly because the use of the parametric Dirichlet distribution for the mixing proportions in LDM may restrict variations in the estimates. 
On the other hand, since the mixing proportions in the four groups are relatively different each other, we may easily interpret the characteristics of the groups, which could be one of practical advantages of the proposed method compared with LDM.   
We also note that the mixing proportions of the first component (zero inflated term) are $0.11$, $0.23$, $0.00$, $0.13$ in groups $1\sim 4$, respectively, so that the zero-inflation component would also be an meaningful component to distinguish the four groups.

We next investigated the predictive performance of GHM and LDM.
We randomly omitted $h\%$ of samples, where $h\in \{5, 10, 15, 20\}$, to act as test data, and set the rest of the data as training data. 
We then estimated GHM and LDM based on the training data, where $G$ and $L$ in GHM were selected via the information criterion (\ref{BIC}) from the same candidates and $L=3$ was adopted in LDM, and predict the response values (number of crimes) using the covariates and offset in the test data.
For measuring the performance of the prediction, we calculated the negative predictive log-likelihood (PLL) and mean squared error (MSE) given by 
\begin{align*}
&{\rm PLL} = -\sum_{i=1}^m\sum_{j\in r_i}\log\left\{\pih_{i1}\delta_0(y_{ij})+\sum_{k=2}^L\pih_{ik}{\rm Po}(y_{ij}; a_{ij}\exp(x_{ij}^t\beh_k))\right\}\\
&{\rm MSE} = \sum_{i=1}^m\sum_{j\in r_i}\Big\{\log(\muh_{ij}+1)-\log(y_{ij}+1)\Big\}^2, \ \ \ \ \ \muh_{ij}=\sum_{k=2}^L\pih_{ik}a_{ij}\exp(x_{ij}^t\beh_k),
\end{align*}  
where $r_i$ denotes the index set of the test data in the $i$th cluster.
Note that smaller values indicate better prediction performance in both criterion, and PLL and MSE are concerned with performance in terms of distribution prediction and point prediction, respectively. 
We repeated the procedure for 100 times, and reported the mean and median values in Table \ref{tab:pred}.
It shows that the proposed GHM approach outperforms the LDM method in terms of both distribution and point prediction, and the amount of improvement tends to increase as the percentage of test data increases.  
This may be because the characteristics of GHM that provides parsimonious modeling and is numerically stable compared with LDM results in good prediction especially when the training data used in the estimation is limited.

\begin{table}
\caption{Estimates of regression coefficients in GHM and LDM with the zero-inflated Poisson mixture models.
The coefficients other than intercept are scaled by the standard deviations of the corresponding covariates. 
\label{tab:est}
}
\begin{center}
\begin{tabular}{cccccccc}
\hline
 &  & \multicolumn{2}{c}{2nd component} && \multicolumn{2}{c}{3rd component} \\
Covariate & & GHM & LDM & & GHM & LDM\\
 \hline
Intercept &  & -5.10 & -5.27 &  & -6.95 & -6.89 \\
Population density in noon &  & 0.212 & 0.182 &  & 0.410 & 0.409 \\
Population density in night &  & 0.428 & 0.455 &  & 0.180 & 0.158 \\
Density of foreign people &  & 0.134 & 0.163 &  & 0.104 & 0.102 \\
Percentage of unemployment &  & -0.151 & -0.132 &  & -0.088 & -0.073 \\
Percentage of single-person household &  & 0.244 & 0.228 &  & 0.440 & 0.459 \\
Percentage of university graduation &  & 0.101 & 0.104 &  & 0.175 & 0.182 \\
Average duration of residence &  & -0.103 & -0.093 &  & -0.062 & -0.052 \\
\hline
\end{tabular}
\end{center}
\end{table}

\begin{table}
\caption{Mean and median values of 100 replicated values of negative predictive log-likelihood (PLL) and mean squared error (MSE) that measures the prediction performance of GHM and LDM under $h\%$ of samples are set to test data.
\label{tab:pred}
}
\begin{center}
\begin{tabular}{cccccccccccc}
\hline
 &  & \multicolumn{2}{c}{PLL (mean)} & \multicolumn{2}{c}{PLL (median)} && \multicolumn{2}{c}{MSE (mean)} & \multicolumn{2}{c}{MSE (median)}\\
$h$ & & GHM & LDM & GHM & LDM && GHM & LDM & GHM & LDM \\
\hline
5 &  & 210.1 & 213.4 & 209.2 & 212.5 &  & 54.1 & 62.7 & 50.9 & 51.2 \\
10 &  & 417.4 & 424.9 & 417.5 & 423.6 &  & 120.5 & 143.1 & 104.8 & 107.4 \\
15 &  & 625.9 & 635.0 & 623.6 & 634.0 &  & 166.3 & 186.0 & 155.0 & 161.7 \\
20 &  & 834.3 & 844.0 & 834.3 & 843.9 &  & 231.2 & 264.6 & 213.1 & 253.3 \\
\hline
\end{tabular}
\end{center}
\end{table}

\begin{figure}[!htb]
\centering
\includegraphics[bb=0 0 1000 1000, width=7cm]{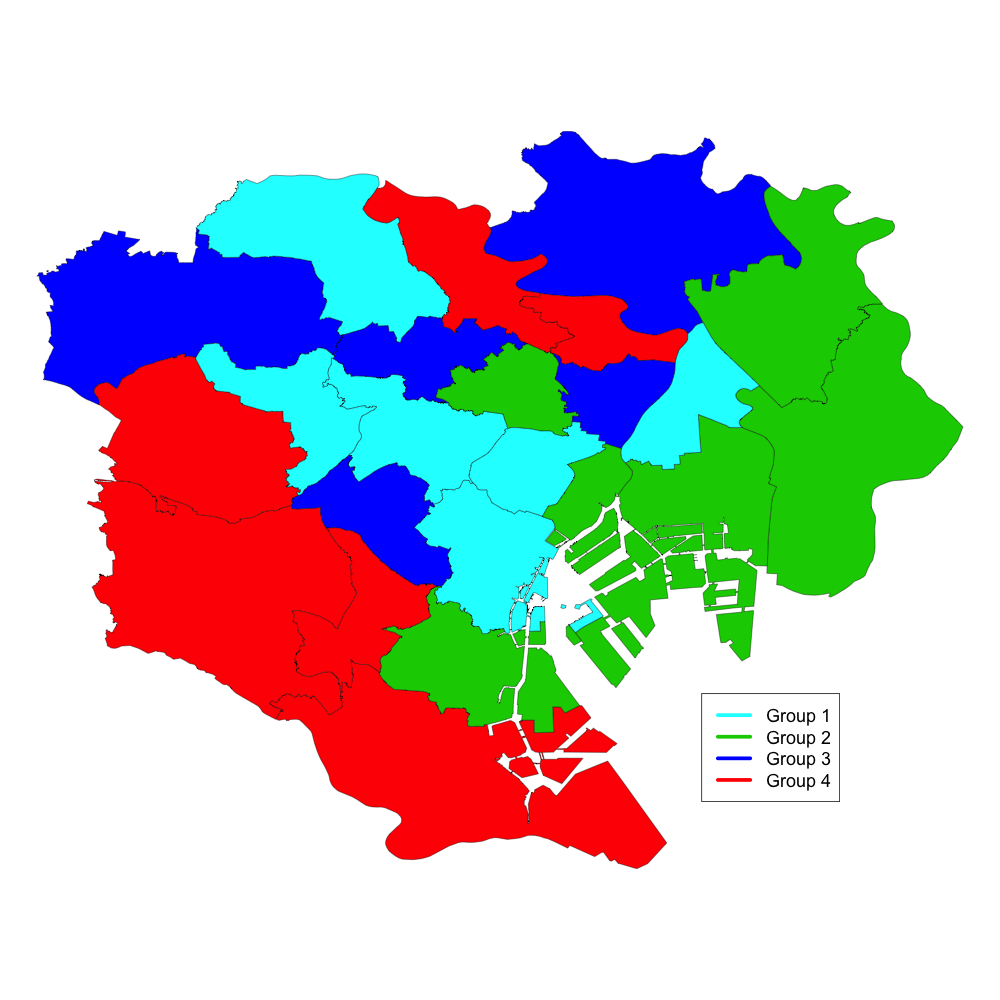}
\includegraphics[width=7cm]{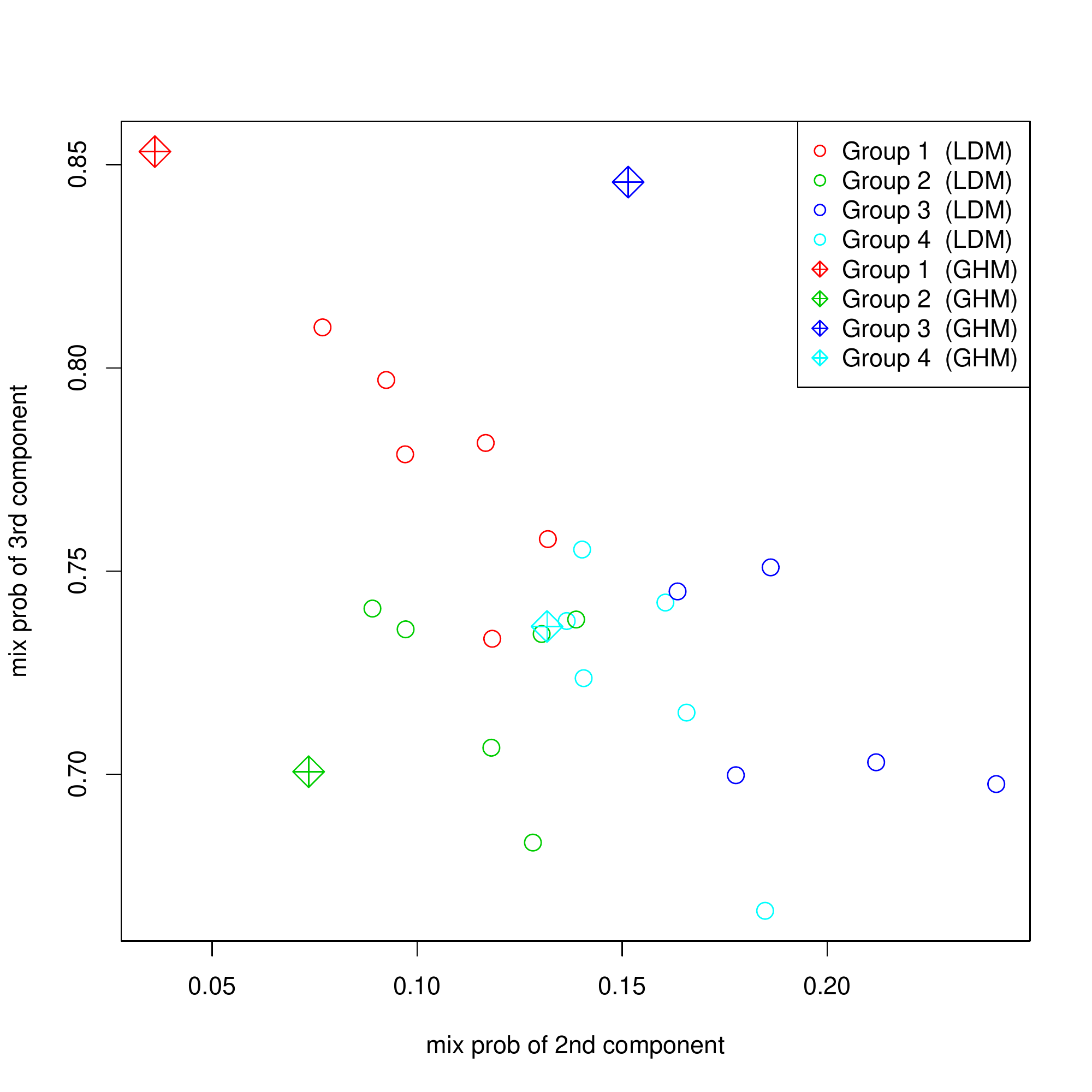}
\caption{Estimated grouping of 23 wards in Tokyo (left) and estimates of cluster-wise mixing proportions in LDM and group-specific mixing proportions in GHM (right).}
\label{fig:group}
\end{figure}

\section{Discussion}\label{sec:disc}
We proposed a new strategy for flexible modeling of conditional distributions for clustered data, called grouped heterogeneous mixture (GHM) modeling, while accounting for heterogeneity among clusters.
The key of the proposed modeling is that cluster-wise conditional distributions are expressed as finite mixtures of latent distributions which are common to all the clusters and between-cluster heterogeneity is expressed via cluster-wise mixing proportions that has a grouping structure. 
We developed the generalized EM algorithm for estimating model parameters as well as unknown grouping parameters, which can be easily carried out owing to the simplicity of each step in the algorithm. 
We proposed the information criterion to select the numbers of groups and latent distributions.
We also considered structured grouping via penalization to incorporate auxiliary information into grouping assignment.
Under the setting where both the number of clusters and cluster sizes tend to infinity, we provided asymptotic properties of the maximum likelihood estimator and selection consistency of the information criterion.
From the results in our numerical studies, we confirmed that the proposed method is quite promising for modeling clustered data.

As considered in Section \ref{sec:pen}, the proposed method may have potential extensions to address important practical issues. 
For example, when the number of covariates or dimension of $x_{ij}$ is large, it would be reasonable to select the subset of suitable covariates to achieve better interpretation of the model or better prediction accuracy. 
In the context of mixture modeling for independent data without cluster information, some regularization techniques have been considered \citep[e.g.][]{Hui2015, KC2007}, and these strategies can be readily incorporated into the proposed model.
When some penalty functions for the regression coefficients, which is a sub-vector of $\phi_k$, are introduced, the M-steps for model parameters other than $\phi_k$ will be unchanged, and the updating processes for $\phi_k$ will be the same as the maximization of the penalized weighted log-likelihood function of the latent distributions. 
The detailed discussion could be an important future work.

Finally, the proposed GHM modeling imposes `hard grouping' of clusters in the sense that ambiguity of grouping assignment is completely ignored. 
While this structure provides us easy interpretations of the estimation results, we may lose a certain degree of flexibility when the data does not admit any grouping structures. 
To overcome the difficulty, we may consider `soft grouping' version of the proposed method by considering pseudo-probability of grouping assignment instead of deterministic grouping in the estimation algorithm.
This direction of research could also be an interesting future work.

\section*{Acknowledgement}
This work is partially supported by Japan Society for Promotion of Science (KAKENHI) grant numbers 16H07406 and 18K12757.

\vspace{0.5cm}
\appendix 

\setcounter{equation}{0}
\renewcommand{\theequation}{A\arabic{equation}}
\setcounter{section}{0}
\renewcommand{\thesection}{A\arabic{section}}
\setcounter{lem}{0}
\renewcommand{\thelem}{A\arabic{lem}}

\begin{center}
{\bf\Large Appendix}
\end{center}

\section{Proof of Theorem \ref{thm:main}}

We require the following regularity conditions:

\begin{itemize}
\item[(C1)]
The density (probability) function $g(Y|X; \gamma,\psi)\equiv \prod_{i=1}^m\prod_{j=1}^{n_i}f_i(y_{ij}|x_{ij};  g_i,\psi)$ is identifiable in $\Theta$ up to the permutation of the component and grouping labels.

\item[(C2)]
The Fisher information matrix
$$
F_m(\Theta)=\E\left[\left(\frac{\partial}{\partial\psi}\log g(Y|X; \gamma,\psi)\right)\left(\frac{\partial}{\partial\psi}\log g(Y|X; \gamma,\psi)\right)^t\right]
$$
is finite and positive definite at $\psi=\psi^0$ and $\gamma=\gamma^0$.

\item[(C3)]
There exists an open subset $\omega$ of $\Omega$ containing true parameters $\psi$, such that there exists functions $M_k(x,y), k=1,2,3,$ with
\begin{align*}
&\frac{f_i(y|x; \gamma,\psi)}{f_i(y|x; \gamma^{\dagger},\psi)}<M_1(x,y),\ \ \ \ \ 
\Big|\frac{\partial}{\partial\psi_s}\log f_i(y|x; \gamma,\psi)\Big|<M_{2s}(x,y), \\ 
&\Big|\frac{\partial^2}{\partial\psi_r\partial\psi_s}\log f_i(y|x; \gamma,\psi)\Big|<M_{3rs}(x,y),
\end{align*}
for arbitrary $\gamma$ and $\gamma^{\dagger}$.
Moreover,  $\E[M_1(x,y)^2]<\infty$, $\E[M_{2s}(x,y)^2]<\infty$ and $\E[M_{3rs}(x,y)^2]<\infty$.

\item[(C4)]
$\psi^0$ is an interior point in the compact set $\Omega\subset \Re^{{\rm dim}(\psi)}$.

\end{itemize}

\noindent
We define a set of neighborhood of $\psi^0$ as $N_\eta=\{\psi^0+\eta u; \|u\|=1\}$, where $\eta$ is a scalar, $u\in\Re^{{\rm dim}(\psi)}$, and $\|\cdot\|$ denotes the $L^2$-norm.
We first show the following lemma.

\begin{lem}\label{lem:mis}
For any $\delta>0$ and $\psi\in N_\eta$, it holds that 
$$
\frac1m\sum_{i=1}^m\Ind(\gh_i(\psi)\neq g_{i0})=o_p(n_1^{-\delta}).
$$
\end{lem}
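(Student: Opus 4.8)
The plan is to bound the average misclassification rate by controlling each cluster's misclassification probability, exploiting a structural feature of the GHM model: two clusters that differ only in their group assignment share the \emph{same} component densities $h_k$, so the between-group log-likelihood ratio is a \emph{bounded} random variable. This boundedness is what upgrades a crude variance bound into exponential concentration and ultimately yields the rate $o_p(n_1^{-\delta})$ for \emph{every} $\delta>0$. Write $\gh_i(\psi)={\rm argmax}_{g\in\{1,\ldots,G\}}\sum_{j=1}^{n_i}\log f_i(y_{ij}|x_{ij};g,\psi)$ for the profiled group estimate, and for $g\neq g_{i0}$ set
$$
S_i(g,\psi)=\sum_{j=1}^{n_i}\log\frac{f_i(y_{ij}|x_{ij};g_{i0},\psi)}{f_i(y_{ij}|x_{ij};g,\psi)}.
$$
Since $\{\gh_i(\psi)\neq g_{i0}\}\subseteq\bigcup_{g\neq g_{i0}}\{S_i(g,\psi)\le 0\}$, taking expectations and invoking Markov's inequality reduces the claim to showing that $\E[\frac1m\sum_i\Ind(\gh_i(\psi)\neq g_{i0})]\le\frac1m\sum_i\sum_{g\neq g_{i0}}\P(S_i(g,\psi)\le 0)$ decays faster than any polynomial in $n_1^{-1}$.

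The two substantive ingredients are as follows. First, a uniform mean gap: because the $y_{ij}$ are conditionally independent, the expectation (under the true model) of each summand of $S_i$ is a cross-entropy difference that reduces at $\psi^0$ to the Kullback--Leibler divergence between the true group distribution and the wrong one, which is strictly positive by identifiability (C1). As there are only finitely many ordered group pairs and the gap depends continuously on $\psi$ by (C3), I would fix $\eta$ small enough that each summand has mean at least $2c$, hence $\E[S_i(g,\psi)]\ge 2cn_i$ for a constant $c>0$ uniform over $i$, $g\neq g_{i0}$, and $\psi\in N_\eta$. Second, boundedness: since the two densities in the ratio share the components $h_k$ and differ only through the mixing weights, $f_i(y;g_{i0},\psi)/f_i(y;g,\psi)\le(\min_{g,k}\pi_{gk})^{-1}$, and symmetrically from below, so each summand of $S_i$ lies in $[-B,B]$ with $B=-\log\pi_{\min}$, a finite constant once $\pi_{\min}=\inf_{\psi\in N_\eta}\min_{g,k}\pi_{gk}>0$ (guaranteed for small $\eta$ because the true mixing proportions are positive).

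Given these two ingredients, $S_i(g,\psi)$ is a sum of $n_i$ independent $[-B,B]$-valued terms with mean at least $2cn_i$, so Hoeffding's inequality gives $\P(S_i(g,\psi)\le 0)\le\exp(-c'n_i)\le\exp(-c'n_1)$ for some $c'>0$ depending only on $c$ and $B$. Summing over clusters and the at most $G-1$ wrong groups yields $\E[\frac1m\sum_i\Ind(\gh_i(\psi)\neq g_{i0})]\le(G-1)e^{-c'n_1}$, and since $n_1^{\delta}e^{-c'n_1}\to0$ for every $\delta>0$, a final application of Markov's inequality delivers $\frac1m\sum_i\Ind(\gh_i(\psi)\neq g_{i0})=o_p(n_1^{-\delta})$.

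The main obstacle, and the step I would emphasize, is the boundedness ingredient. Without the observation that group membership enters only through the bounded mixing weights, one is left with a log-ratio of merely finite variance (as in the generic form of (C3)), and Chebyshev's inequality would give only $O_p(n_1^{-1})$, which is too weak to beat $n_1^{-\delta}$ for $\delta\ge1$; recognizing the uniform boundedness is therefore what makes the ``for any $\delta$'' conclusion possible. A secondary technical point, should the lemma be needed at the random $\psih$ rather than at a fixed $\psi$, is promoting this pointwise bound to one uniform over $N_\eta$; I would handle that with a covering argument over the compact closure of $N_\eta$, using the Lipschitz-type derivative bound $M_{2s}$ in (C3) to control the oscillation of $S_i(g,\cdot)$.
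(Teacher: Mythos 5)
Your argument is correct and arrives at the same exponential bound $\P(\gh_i(\psi)\neq g_i^0)\leq (G-1)e^{-c'n_i}$ that drives the paper's proof, but the concentration step is genuinely different. The paper applies Markov's inequality to $\exp\bigl(\tfrac12\sum_j\log\{f_i(y_{ij};g,\psi)/f_i(y_{ij};g_i^0,\psi)\}\bigr)$, i.e.\ a Chernoff bound at $t=1/2$, which converts the tail probability into the $n_i$-th power of the Hellinger affinity $\E\bigl[\sqrt{f_i(\cdot;g,\psi)/f_i(\cdot;g_i^0,\psi)}\bigr]$; a Taylor expansion in $\psi$ about $\psi^0$ together with (C3) bounds this affinity by $1-H(f(\cdot;g,\psi),f(\cdot;g_i^0,\psi^0))+C\eta$, which identifiability (C1) forces below $e^{-c}$ for small $\eta$. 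You instead exploit the model structure---the two candidate densities share the components $h_k$ and differ only through the mixing weights, so the per-observation log-ratio is uniformly bounded by $-\log\pi_{\min}$---and combine a Kullback--Leibler mean gap with Hoeffding's inequality. Both routes are valid. What the paper's route buys is generality: the affinity $\E[\sqrt{f_g/f_{g_0}}]$ is at most $1$ for \emph{any} pair of densities (Cauchy--Schwarz), so exponential decay is obtained without any boundedness of the log-ratio; your claim that recognizing boundedness is what ``makes the for-any-$\delta$ conclusion possible'' is therefore overstated as a matter of necessity, though it is a clean and correct sufficient route here. What your route buys is a more elementary tail inequality and an explicit identification of where the grouped-mixture structure enters. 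Both arguments share the same mild gaps: the uniform (over $i$, over the covariate distribution, and over $\psi\in N_\eta$) separation constant is asserted from identifiability plus continuity rather than derived, and the lemma is stated pointwise in $\psi$ while the paper later takes a supremum over $N_\eta$; your proposed covering argument using $M_{2s}$ in (C3) is exactly the right repair for that second point, which the paper leaves implicit.
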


\begin{proof}
From Markov's inequality, it follows that 
\begin{equation}\label{misprob}
\P\left(\frac1m\sum_{i=1}^m\Ind(\gh_i(\psi)\neq g_{i0})>\ep n_1^{-\delta}\right)
\leq \frac{n_1^{\delta}}{m\ep}\sum_{i=1}^m\P(\gh_i(\psi)\neq g_{i0})
\end{equation}
for arbitrary $\ep>0$ and $\psi\in N_\eta$.
Note that 
\begin{align*}
\P\big(\gh_i(\psi)\neq g_i^0\big)
&\leq \sum_{g\neq g_i^0}\P\Big(\sum_{j=1}^{n_i}\log f_i(y_{ij}; g,\psi)\geq \sum_{j=1}^{n_i}\log f_i(y_{ij}; g_i^0,\psi)\Big)\\
&=\sum_{g\neq g_i^0}\P\Big(\sum_{j=1}^{n_i}\log \frac{f_i(y_{ij}; g,\psi)}{f_i(y_{ij}; g_i^0,\psi)}\geq 0\Big).
\end{align*}
Moreover, from Markov's inequality, it holds that 
\begin{align*}
\P\Big(\sum_{j=1}^{n_i}\log \frac{f_i(y_{ij}; g,\psi)}{f_i(y_{ij}; g_i^0,\psi)}\geq 0\Big)
&\leq \left\{\E\left[\exp\left(\frac12\log \frac{f_i(y_{ij}; g,\psi)}{f_i(y_{ij}; g_i^0,\psi)}\right)\right]\right\}^{n_i}\\
&=\exp\left\{n_i\log\E\left[\sqrt{\frac{f_i(y_{ij}; g,\psi)}{f_i(y_{ij}; g_i^0,\psi)}}\right]\right\}.
\end{align*}
Note that 
\begin{align*}
\sqrt{\frac{1}{f_i(y_{ij}; g_i^0,\psi)}}
&=\sqrt{\frac{1}{f_i(y_{ij}; g_i^0,\psi^0)}}+f(y_{ij}; g_i^0,\psi^{\dagger})^{-3/2}(\psi-\psi^0)^t\frac{\partial}{\partial\psi}f_i(y_{ij}; g_i^0,\psi)\Big|_{\psi=\psi^{\dagger}}\\
&=\sqrt{\frac{1}{f_i(y_{ij}; g_i^0,\psi^0)}}+f(y_{ij}; g_i^0,\psi^{\dagger})^{-1/2}(\psi-\psi^0)^t\frac{\partial}{\partial\psi}\log f_i(y_{ij}; g_i^0,\psi)\Big|_{\psi=\psi^{\dagger}},
\end{align*}
where $\psi^{\dagger}$ lies on the line segment joining $\psi$ and $\psi^0$.
Therefore, under Assumption (C3), we have
\begin{align*}
&\E\left[\sqrt{\frac{f_i(y_{ij}; g,\psi)}{f_i(y_{ij}; g_i^0,\psi)}}\right]=\int \sqrt{\frac{f_i(y_{ij}; g,\psi)}{f_i(y_{ij}; g_i^0,\psi)}}f_i(y_{ij}; g_i^0,\psi^0){\rm d}y_{ij}\\
&=\int \sqrt{f_i(y_{ij}; g,\psi)f_i(y_{ij}; g_i^0,\psi^0)}{\rm d}y_{ij}+(\psi-\psi^0)^t \E\left[\left(\frac{\partial}{\partial\psi}\log f_i(y_{ij}; g_i^0,\psi)\Big|_{\psi=\psi^{\dagger}}\right)\sqrt{\frac{f_i(y_{ij}; g,\psi)}{f(y_{ij}; g_i^0,\psi^{\dagger})}}\right]\\
&\leq 1-H(f(\cdot; g,\psi),f(\cdot; g_i^0,\psi^0))+C\eta,
\end{align*}
where $H(f_0,f_1)$ denotes the Hellinger distance between $f_0$ and $f_1$ and $C$ is a constant. 
Under Assumption (C1), $\inf_{\psi\in N_\eta}H(f(\cdot; g,\psi),f(\cdot; g_i^0,\psi^0))>0$ when $g\neq g_i^0$, so that for sufficiently small $\eta$, there exist a constant $c>0$ such that
$$
\sup_{\psi\in N_\eta}\log\Big\{1-H(f(\cdot; g,\psi),f(\cdot; g_i^0,\psi^0))+C\eta\Big\}<-c.
$$
Hence, we have
$$
\P\Big(\sum_{j=1}^{n_i}\log \frac{f_i(y_{ij}; g,\psi)}{f_i(y_{ij}; g_i^0,\psi)}\geq 0\Big)\leq \exp(-cn_i)=o(n_1^{-\delta}),
$$
thereby the right side of (\ref{misprob}) is $o(1)$, which completes the proof. 
\end{proof}

\vspace{0.3cm}\noindent
We next show the consistency of $\Thh$ as given in the following lemma:
\begin{lem}\label{lem:cons}
As $m\to\infty$ and $n_1\to\infty$, it holds that $\Thh\to\Theta^0$.
\end{lem}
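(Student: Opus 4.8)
The plan is to first establish consistency of the structural parameter $\psih$ by a profiling argument that sidesteps the discreteness of $\gamma$, and then to recover consistency of the grouping directly from Lemma \ref{lem:mis}. Since the grouping enters the likelihood only through the per-cluster terms, I would define the profiled log-likelihood
\[
Q_p(\psi)=\sum_{i=1}^m \max_{g\in\{1,\ldots,G\}}\sum_{j=1}^{n_i}\log f_i(y_{ij}|x_{ij};g,\psi),
\]
so that $\psih$ maximizes $Q_p$ and the accompanying $\gah$ is exactly the profiled grouping $\gh_i(\psih)$ of Lemma \ref{lem:mis}. Writing $N^{-1}Q_p(\psi)=\sum_{i=1}^m(n_i/N)\,\max_g\big(n_i^{-1}\sum_j\log f_i(y_{ij};g,\psi)\big)$, I would argue that each inner average converges, as $n_i\to\infty$, to $\mu_i(\psi)=\max_g\E_{g_i^0,\psi^0}[\log f_i(Y;g,\psi)]$; because the $m$ clusters share the common latent family $h_k$ and differ only through $g_i^0\in\{1,\ldots,G\}$, there are at most $G$ distinct limit functions $\mu_i$, which supplies the uniformity needed below.

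Two ingredients then drive the consistency of $\psih$. The information inequality gives $\E_{g_i^0,\psi^0}[\log f_i(Y;g,\psi)]\le\E_{g_i^0,\psi^0}[\log f_i(Y;g_i^0,\psi^0)]$ for every $g$, with equality only when $f_i(\cdot;g,\psi)=f_i(\cdot;g_i^0,\psi^0)$ almost everywhere; hence $\mu_i(\psi)\le\mu_i(\psi^0)$, and by identifiability (C1) simultaneous equality across clusters forces $\psi=\psi^0$ up to label permutation, so the averaged limit of $N^{-1}Q_p$ is uniquely maximized at $\psi^0$. The dominating functions $M_1,M_{2s},M_{3rs}$ of (C3), which have finite second moments, furnish a uniform law of large numbers over the compact set $\Omega$ of (C4), giving uniform convergence of $N^{-1}Q_p(\psi)$ to its limit. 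A standard argmax (Wald-type) argument then yields $\psih\to\psi^0$. Here care is required because both $m\to\infty$ and $n_1\to\infty$ simultaneously: I would exploit $n_1\le n_i\le n_m=O(n_1^{\alpha})$ together with $m/n_1^{\nu}\to 0$ to control the cluster-size-weighted averages $\sum_i(n_i/N)(\cdot)$ and to make the fluctuations negligible uniformly across the heterogeneous cluster sizes.

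Finally, once $\psih\to\psi^0$ the estimator eventually lies in the neighborhood $N_\eta$, so Lemma \ref{lem:mis} applies at $\psi=\psih$ and yields $m^{-1}\sum_{i=1}^m\Ind(\gah_i\neq g_i^0)=o_p(1)$, i.e.\ the grouping is consistently recovered up to relabeling; combined with $\psih\to\psi^0$ this gives $\Thh\to\Theta^0$. Note that this route avoids the apparent circularity of needing $\psih\in N_\eta$ before invoking Lemma \ref{lem:mis}, since the profiling maximizes over $g$ explicitly and does not itself require the misclassification bound. I expect the main obstacle to be the uniform law of large numbers under this joint double-limit with unequal cluster sizes: the usual i.i.d.\ uniform LLN must be adapted to the weighted sum with the growth constraints on $n_m$ and $m$, and the interchange of the maximum over $g$ with the limit, although harmless for finite $G$, has to be made uniform in $\psi$ so that the argmax reasoning goes through.
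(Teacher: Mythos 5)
Your proof is a genuinely different route from the paper's. The paper argues locally: it perturbs the truth to $(\psi^0+\eta u,\gamma_s^{\dagger})$ with $s$ misclassified clusters, splits the change in the normalized log-likelihood $R(\psi,\gamma)=N^{-1}Q(\Theta)$ into a smooth part $I_1=o(\eta)$ (handled as in Hui et al.) and a grouping part $I_2\leq sC$, and concludes that a local maximizer must sit inside a shrinking neighborhood of $(\psi^0,\gamma^0)$. You instead profile out the discrete grouping, observe that the joint maximizer satisfies $\gah_i=\gh_i(\psih)$ with $\psih$ maximizing $Q_p(\psi)=\sum_i\max_g\sum_j\log f_i(y_{ij}|x_{ij};g,\psi)$, and run a global Wald-type argmax argument: the information inequality plus (C1) make $\psi^0$ the unique maximizer of the limit, the envelopes in (C3) and compactness in (C4) give a uniform LLN, and Lemma \ref{lem:mis} applied at $\psih$ then recovers the grouping. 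What your approach buys is transparency about why the discrete nuisance does not spoil consistency (it is maximized out rather than perturbed), and it cleanly resolves the ordering issue of needing $\psih\in N_\eta$ before Lemma \ref{lem:mis} can be invoked; the paper's local argument is shorter but, as written, only bounds how much a wrong grouping can change the objective rather than showing it strictly decreases it, so your global argument is arguably the more self-contained of the two.

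Two caveats, neither fatal and both present in some form in the paper's own treatment. First, Lemma \ref{lem:mis} is stated for a fixed $\psi\in N_\eta$; to apply it at the random point $\psih$ you need the uniform-over-$N_\eta$ version, which does follow from the paper's proof since the exponential bound $\exp(-cn_i)$ is obtained uniformly on $N_\eta$, but you should say so explicitly. Second, your separation step ``simultaneous equality across clusters forces $\psi=\psi^0$'' relies on (C1), which identifies only the joint product density; for the weighted limit $\sum_i(n_i/N)\mu_i(\psi)$ to be \emph{well-separated} at $\psi^0$ you additionally need every group to retain asymptotically non-vanishing weight in that average (otherwise a component visible only in a vanishing group could leave the limit flat in some directions). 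This is an implicit assumption of the whole asymptotic framework, but it is where your argmax argument would actually break if it failed.
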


\begin{proof}
Define $u\in\Re^{{\rm dim}(\psi)}$ such that $\|u\|=1$ and $\gamma_s^{\dagger}=(g_{1s},\ldots,g_{ms})\in \{1,\ldots,G\}^m$ such that $d(\gamma_s^{\dagger},\gamma^0)\equiv \sum_{i=1}^m \Ind(g_{is}^{\dagger}\neq g_i^0)=s$.
Further, we define 
$$
R(\psi,\gamma)\equiv \frac1N\sum_{i=1}^mR_i(\psi,g_i)=\frac1NQ(\Theta).
$$
Note that 
\begin{align*}
&|R(\psi^0+\eta u,\gamma_s^{\dagger})-R(\psi^0,\gamma^0)|\\
& \ \ \ \ \ \ 
=|R(\psi^0+\eta u,\gamma^0)-R(\psi^0,\gamma^0)|+|R(\psi^0+\eta u,\gamma_s^{\dagger})-R(\psi^0+\eta u,\gamma^0)|
=I_1+I_2.
\end{align*}
Under Assumption (C2) and (C3), using the similar argument given in the proof of Theorem 1 in \cite{Hui2015}, it holds that $I_1=o(\eta)$.
Regarding $I_2$, it is noted that 
\begin{align*}
I_2&\leq \frac1N\sum_{i=1}^m|R_i(\psi^0+\eta u,g_{is}^{\dagger})-R_i(\psi^0+\eta u,g_i^0)|\\
&=\frac1N\sum_{i=1}^mI(g_{is}^{\dagger}\neq g_i^0)|R_i(\psi^0+\eta u,g_{is}^{\dagger})-R_i(\psi^0+\eta u,g_i^0)|\leq sC,
\end{align*}
for some $C>0$.
Therefore, for any $s$, there exists a local maximum inside the set $\{\psi^0+\eta u; \|u\|=1\}\bigcup \{\gamma^{\dagger}; d(\gamma^{\dagger},\gamma^0)<s\}$, so that the consistency follows.
\end{proof}

\vspace{0.3cm}\noindent
We define the following two objective functions:
\begin{equation}\label{Q2}
\begin{split}
&\Qh(\psi)=\sum_{i=1}^m\sum_{j=1}^{n_i}\log \ft_{ij}(\psi) \equiv \sum_{i=1}^m\sum_{j=1}^{n_i}\log\left(\sum_{k=1}^L\pi_{\gh_i(\psi)k}h_k(y_{ij}|x_{ij};\phi_k)\right), \\
&\Qt(\psi)=\sum_{i=1}^m\sum_{j=1}^{n_i}\log f_{ij}^0(\psi) \equiv \sum_{i=1}^m\sum_{j=1}^{n_i}\log\left(\sum_{k=1}^L\pi_{g_i^0k}h_k(y_{ij}|x_{ij};\phi_k)\right),
\end{split}
\end{equation}
and we define $\psih$ and $\psit$ as the maximizer of $\Qh(\psi)$ and $\Qt(\psi)$, respectively.
Note that $\sqrt{N}F_m(\psit)^{1/2}(\psit-\psi^0)\to N(0,I_{{\rm dim}(\psi)})$.
It holds that 
\begin{align*}
\bigg|\frac1N\Qh(\psi)-\frac1N\Qt(\psi)\bigg|
&\leq \frac1N\sum_{i=1}^m\sum_{j=1}^{n_i}I(\gh_i(\psi)\neq g_i^0)\big|\log \ft_{ij}(\psi)-\log f_{ij}^0(\psi)\big|\\
&\leq \left(\frac1N\sum_{i=1}^m n_i I(\gh_i(\psi)\neq g_i^0)\right)^{\frac12}\left\{\frac1N\sum_{i=1}^m\sum_{j=1}^{n_i}\left(\log \frac{\ft_{ij}(\psi)}{f_{ij}^0(\psi)}\right)^2\right\}^{\frac12}\\
&\leq \left(\frac{n_m}{mn_1}\sum_{i=1}^m I(\gh_i(\psi)\neq g_i^0)\right)^{\frac12}\left\{\frac1N\sum_{i=1}^m\sum_{j=1}^{n_i}\left(\log \frac{\ft_{ij}(\psi)}{f_{ij}^0(\psi)}\right)^2\right\}^{\frac12},
\end{align*}
thereby, from Lemma \ref{lem:mis}, we have 
\begin{equation}\label{sup}
\sup_{\psi\in N_\eta}|N^{-1}\Qh(\psi)-N^{-1}\Qt(\psi)|=o_p(n_1^{-(\delta+1-\alpha)/2}).
\end{equation}
Then, from the definition of $\psit$ and $\psih$, we have
\begin{align*}
0&\leq \Qt(\psit)-\Qt(\psih)
=\{\Qt(\psit)-\Qh(\psit)\}+\{\Qh(\psit)-\Qt(\psih)\}\\
&\leq 
\{\Qt(\psit)-\Qh(\psit)\}+\{\Qh(\psih)-\Qt(\psih)\}=o_p(n_1^{-(\delta+1-\alpha)/2}),
\end{align*}
where the last equality follows from (\ref{sup}) and $\psih,\psit\in N_\eta$ for large $n_i$ since $\psih$ and $\psit$ are consistent.
from (\ref{sup}), so that $\psih-\psit=o_p(n_1^{-(\delta+1-\alpha)/2})$.
Hence, it follows that 
\begin{align*}
\sqrt{N}(\psih-\psi^0)&=\sqrt{N}(\psit-\psi^0)+\sqrt{N}(\psih-\psit)
=\sqrt{N}(\psit-\psi^0)+o_p(\sqrt{N}n_1^{-(\delta+1-\alpha)/2})\\
&=\sqrt{N}(\psit-\psi^0)+o_p(1),
\end{align*}
since $\sqrt{N}n_1^{-(\delta+1-\alpha)/2}\leq \sqrt{m}n_1^{-(\delta+1-2\alpha)/2}=\sqrt{m}n_1^{-(\nu+1)/2}$ by choosing $\delta=2\alpha+\nu$, and is $o(1)$ under $m/n^{\nu}\to 0$, which establishes (\ref{st2}).
Moreover, we have 
\begin{align*}
\frac1m&\sum_{i=1}^m\sum_{k=1}^L(\pih_{\gh_ik}-\pi^0_{g_i^0k})^2\\
&=\frac1m\sum_{i=1}^m\sum_{k=1}^L\left\{(\pih_{\gh_ik}-\pih_{g_i^0k})+(\pih_{g_i^0k}-\pit_{g_i^0k})+(\pit_{g_i^0k}-\pi^0_{g_i^0k})\right\}^2\\
&\leq 
\frac3m\sum_{i=1}^m\sum_{k=1}^L(\pih_{\gh_ik}-\pih_{g_i^0k})^2
+\frac3m\sum_{i=1}^m\sum_{k=1}^L(\pih_{g_i^0k}-\pit_{g_i^0k})^2
+\frac3m\sum_{i=1}^m\sum_{k=1}^L(\pit_{g_i^0k}-\pi^0_{g_i^0k})^2\\
&= o_p(n_1^{-\delta}) + o_p(n_1^{-\delta}) + O_p(m^{-1}),
\end{align*}
from Lemma \ref{lem:mis} and the same argument used in (\ref{sup}).
Hence, (\ref{st1}) follows under $m/n_1^{\nu}\to 0$.

\section{Proof of Theorem \ref{thm:selection}}
In this proof, let $\Theta^0$ be the true parameter. 
Moreover, we use $\Qh_{G, L}$ and $\Qt_{G, L}$ to denote $\Qh$ and $\Qt$ in (\ref{Q2}) to address the dependency on $G$ and $L$.
Here we require an additional condition given below.

\begin{itemize}
\item[(C5)]
There exists a constant $R$ such that $E[Q_o(\Theta^0)-Q_o(\Theta^{\ast}_{G, L})]\geq R >0$ for all models with $G<G_0$ or $L<L^0$, where $Q_o(\Theta_{G, L})$ denotes the log-likelihood of a single observation, and $\Theta^{\ast}_{G, L}$ denotes the pseudo-true parameters \citep[e.g.][]{White} which maximizes $E[Q_o(\Theta_{G, L})]$ with respect to $\Theta_{G, L}$.
\end{itemize}

Note that when $G<G_0$ or $L<L^0$, the model is under-specified, thereby the above condition requires that the Kullback-Leibler distance between any under-specified models with pseudo-true parameter and the true model with $G=G^0$ and $L=L^0$ is positive.
Conditions similar to (C5) are typically imposed in theoretical development on consistency of model selection or variable selection \citep[e.g.][]{Keribin2000, Hui2017b}.

We show that $P\{{\rm IC}(G, L)<{\rm IC}(G^0, L^0)\}=o(1)$ for each $(G, L)\neq (G^0, L^0)$.
We evaluate the probability separately in two cases, namely, $G<G^0$ or $L<L^0$ and $G>G^0$ and $L>L^0$.

We first consider the under-specified case.
It holds that 
\begin{equation}\label{dif1}
\frac1N{\rm IC}(G, L)-\frac1N{\rm IC}(G^0, L^0)
=\frac2NQ(\Thh_{G^0, L^0})-\frac2NQ(\Thh_{G, L})+\frac{P\log N}{N},
\end{equation}
where $P=G(L-1)-G^0(L^0-1)-\sum_{k=\min\{L,L^0\}+1}^{\max\{L,L^0\}}{\rm dim}(\phi_k)$ is the difference of the numbers of parameters, which is $O(1)$.
From the definition of $\Qh$ and $\Qt$ in (\ref{Q2}) and the property given in (\ref{sup}), for sufficiently large $N$, it holds that 
\begin{equation}\label{Q3}
\begin{split}
\frac1NQ(\Thh_{G^0, L^0})
&=\frac1N\Qh_{G^0, L^0}(\psih)=\frac1N\Qt_{G^0, L^0}(\psih)+o_p(n_1^{-\delta^{\ast}})\\
&=\frac1N\Qt_{G^0, L^0}(\psi^0)+o_p(n_1^{-\delta^{\ast}})=\frac1NQ(\Theta^0)+o_p(n_1^{-\delta^{\ast}}),
\end{split}
\end{equation}
for sufficiently large $\delta^{\ast}$.
For the under-specified choice, we can use a similar argument to the proof of Theorem \ref{thm:main} to show that $\Qh_{G, L}$ and $\Qt_{G, L}$ hold the same property as (\ref{sup}), so that a similar argument given above leads to $N^{-1}Q(\Thh_{G, L})=N^{-1}Q(\Theta^{\ast}_{G, L})+o_p(1)$, where $\Theta^{\ast}_{G, L}$ is the pseudo-true parameter.
Moreover, by the weak law of large numbers, we have $N^{-1}Q(\Theta^0)\to E[Q_o(\Theta^0)]$ and $Q(\Theta^{\ast}_{G, L})\to E[Q_o(\Theta^{\ast})]$ in probability. 
Therefore, it follow that 
\begin{align*}
\frac1N\left\{Q(\Thh_{G^0, L^0})-Q(\Thh_{G, L})\right\}
&=E[Q_o(\Theta^0)]-E[Q_o(\Theta^{\ast})]+o_p(1) \geq R+o_p(1).
\end{align*}
Then, (\ref{dif1}) is strictly positive with probability tending to $1$.

We next consider the over-specified case, namely, $G>G^0$ and $L>L^0$.
Again, we let $P=G(L-1)-G^0(L^0-1)-\sum_{k=L^0+1}^{L}{\rm dim}(\phi_k)$ be the difference of the numbers of parameters, which is $O(1)$ and $P>0$.
Note that   
$$
{\rm IC}(G, L)-{\rm IC}(G^0, L^0)=2\left\{Q(\Thh_{G^0,L^0})-Q(\Thh_{G,L})\right\}+P\log N,
$$
and 
\begin{align*}
&Q(\Thh_{G^0,L^0})-Q(\Thh_{G,L})\\
&=\left\{Q(\Thh_{G^0,L^0})-Q(\Theta^0)\right\}-\left\{Q(\Thh_{G,L^0})-Q(\Theta^0)\right\}+\left\{Q(\Thh_{G,L^0})-Q(\Thh_{G,L})\right\}\\
&\geq -\left\{Q(\Thh_{G,L^0})-Q(\Theta^0)\right\}+\left\{Q(\Thh_{G,L^0})-Q(\Thh_{G,L})\right\}.
\end{align*}
From a similar argument to (\ref{Q3}), we can obtain that the first term is $o_p(n_1^{-\delta^{\ast}})$ for sufficiently large $\delta^{\ast}$.
Regarding the second term, it holds that
\begin{align*}
&Q(\Thh_{G,L})-Q(\Thh_{G,L^0})\\
&=\sum_{i=1}^m\sum_{j=1}^{n_i}I(\gh_i(L)=\gh_i(L^0))\log \frac{f_{ij}(\Thh_{G,L})}{f_{ij}(\Thh_{G,L^0})}
+\sum_{i=1}^m\sum_{j=1}^{n_i}I(\gh_i(L)\neq \gh_i(L^0))\log \frac{f_{ij}(\Thh_{G,L})}{f_{ij}(\Thh_{G,L^0})}\\
&\equiv I_1 + I_2.
\end{align*}
From a similar argument to Lemma \ref{lem:mis} and the derivation of (\ref{sup}), it follows that $N^{-1}I_2=o_p(n_1^{-\delta^{\ast}})$.
On the other hand, we have 
\begin{align*}
I_1=\sum_{g=1}^G\sum_{i=1}^m\sum_{j=1}^{n_i}I\{\gh_i(L)=\gh_i(L^0)=g\}\log \frac{f_{ij}(\Thh_{G,L})}{f_{ij}(\Thh_{G,L^0})}=\sum_{g=1}^G\ell_g,
\end{align*}
where $\ell_{g}$ is the likelihood ratio statistic for testing the number of components $L_0$ against $L$ in the mixture model fitted only to the data classified $g$th group. 
From (C4) and \cite{DG1999}, $\ell_{g}$ converges to a random variable characterized by the supremum of a stochastic process, thereby we obtain $I_1=O_p(1)$. 
Finally, we have 
$$
{\rm IC}(G, L)-{\rm IC}(G^0, L^0)=-2\times O_p(1)+o_p(1)+P\log N,
$$
so that the right hand side tends to infinity as $N\to\infty$, and the difference is guaranteed to be positive with probability tending to $1$.

\vspace{1cm}
\bibliographystyle{chicago}
\bibliography{ref}

\end{document}